\documentclass[format=acmsmall, review=false, nonacm]{acmart}
\usepackage{acm-ec-26}
\usepackage{booktabs}
\usepackage[ruled]{algorithm2e}

\SetAlFnt{\small}
\SetAlCapFnt{\small}
\SetAlCapNameFnt{\small}
\SetAlCapHSkip{0pt}
\IncMargin{-\parindent}

\setcitestyle{acmnumeric}

\usepackage{tikz}
\usetikzlibrary{positioning, calc, shapes.geometric}

\newcommand{\Xomit}[1]{}
\newcommand{\graph}{\mathcal{G}}
\hypersetup{
    colorlinks=true,
    linkcolor=blue,
    filecolor=blue,
    citecolor = blue,      
    urlcolor=cyan,
}

\newcommand{\oddcycles}{\nu}
\newcommand{\typeset}{T}
\newtoggle{version}
\toggletrue{version}

\newtoggle{comments}
\toggletrue{comments}

\newcommand{\inc}{\rho}

\newcommand{\indep}{\nu'}

\newtheorem{example}{\sc Example} [section]
\newtheorem{definition}{\sc Definition} [section]
\newtheorem{lemma}{\sc Lemma} [section]

\newtheorem{theorem}{\sc Theorem}[section]  
\newtheorem{corollary}{\sc Corollary} [section]

\newtheorem{remark}{\sc Remark}[section]

\newcommand{\N}{N}
\newcommand{\Ex}{\mathcal{E}}
\newcommand{\G}{\mathcal{G}}
\newcommand{\GG}{\mathcal{G}^{+V^0}}

\title[Core Stable Kidney Exchange via Altruistic Donors]{Core Stable Kidney Exchange via Altruistic Donors}

\author{Gergely Csáji}
\affiliation{%
  \institution{ELTE KRTK}
  \city{Budapest}
  \country{Hungary}}
\email{csaji.gergely@krtk.elte.hu}

\author{Thanh Nguyen}
\affiliation{%
  \institution{Purdue University}
  \city{West Lafayette}
  \state{Indiana}
  \country{USA}}
\email{nguye161@purdue.edu}

\begin{abstract}

Kidney exchange programs among hospitals in the United States and across European countries improve efficiency by pooling donors and patients on a centralized platform. Sustaining such cooperation requires stability. When the core is empty, hospitals or countries may withhold easily matched pairs for internal use, creating incentive problems that undermine participation and reduce the scope and efficiency of exchange.

We propose a method to restore core stability by augmenting the platform with altruistic donors. Although the worst-case number of required altruists can be large, we show that in realistic settings only a small number is needed. We analyze two models of the compatibility graph, one based on random graphs and the other on compatibility types. When only pairwise exchanges are allowed, the number of required altruists is bounded by the maximum number of independent odd cycles, defined as disjoint odd cycles with no edges between them. This bound grows logarithmically with market size in the random graph model and is at most one third of the number of compatibility types in the type-based model. When small exchange cycles are allowed, it suffices for each participating organization to receive a number of altruists proportional to the number of compatibility types. Finally, simulations show that far fewer altruists are needed in practice than worst-case theory suggests. 
\end{abstract}

\begin{document}

\maketitle

\tableofcontents

\section{Introduction}

The kidney exchange market in the U.S. facilitates hundreds of transplants annually for patients with a willing but incompatible live donor. However, the market remains fragmented and operates inefficiently (\citet{agarwal2019market}). Exchange platforms often rely on suboptimal mechanisms, and hospitals have limited incentives to submit pairs with a high likelihood of successful exchange. A similar challenge arises in Europe, where programs are smaller in scale and face cross-border coordination frictions. Fragmented pools and heterogeneous institutional incentives further limit matching efficiency (\citet{biro2019building}).  

\citet{agarwal2019market} identify a key source of inefficiency in kidney exchanges: hospitals face substantial costs when submitting donor–patient pairs. These costs include administrative burdens, additional medical testing and logistical coordination. When such costs are significant, hospitals may withhold pairs—particularly those that are easier to match elsewhere—reducing participation and limiting overall matching quality. The paper proposes mechanisms to rebate these costs, thereby improving efficiency.

However, even if such mechanisms are implemented, or when submission costs are negligible, exchange platforms still face a structural challenge: the matching algorithm must satisfy strong coalition-proofness to prevent market unraveling. Economic theory provides a clear benchmark for this: allocations should lie in the core, ensuring that no group of hospitals or countries can achieve better outcomes by conducting their own exchange. This also prevents hospitals from conducting internal matches while submitting only hard-to-match pairs—a practice observed in both U.S. and European programs. (\citet{AshlagiRoth2021, FailureAwareKE, biro2019building}).

Yet a fundamental question remains: under what conditions does a core allocation exist, and when it does not, what alternative mechanisms can prevent the exchange platform from unraveling? This paper addresses these questions within a well-studied graph-theoretic model of kidney exchange, establishing both impossibility results and potential remedies.

We model kidney exchange as a directed graph, where each node represents either a patient–donor pair or a non-directed altruistic donor. A directed edge indicates that a donor can give a kidney to a compatible patient. Several organizations (players) participate in the exchange, each controlling a distinct set of nodes corresponding to the patient–donor pairs registered with that organization. In addition, a special centralized player, representing the exchange platform, also owns a subset of patient–donor pairs registered with the platform and plays a coordinating role. Exchanges occur through cycles and chains, in which donors give kidneys to compatible patients along a closed loop or directed path. Cycles of size two are referred to as pairwise exchanges.  The goal is to select an exchange in the weak core, meaning that  no subgroup of organizations can strictly increase their number of transplants by arranging their own separate exchange.

We show that in general compatibility graphs, a weak core solution may fail to exist under both pairwise and longer cycle exchanges. To address this issue, we introduce the concept of a \emph{supplemented core}. The key idea is to use altruistic donors—individuals willing to donate an organ without receiving one in return—to help stabilize the market. A supplemented core consists of exchanges among the original patient-donor pairs together with a few additional altruistic donors provided by the centralized platform, such that no coalition of organizations can benefit by forming an alternative exchange on their own.

In theory, with a sufficient number of altruistic donors, a centralized platform can match all patients and achieve a core outcome. However, the number of altruistic donors required for this  is unrealistically large.  We show that while general compatibility graphs may require a large number of additional donors, the number needed is significantly smaller for graphs that reflect realistic settings. To model such graphs, we consider two frameworks: one based on a random graph process, as studied in the literature (\citet{ashlagi2014free, delorme2022improved}), and another based on type presentation (\citet{dickerson2017small}). We present three main results.

First,  without any assumptions on the compatibility graph, obtaining a core solution requires at least \(\Omega(|V|)\) additional donors for either pairwise or cyclic exchange.

Second, for pairwise exchange, the required number of altruistic donors is at most the maximum number of \emph{independent odd cycles} in the mutual compatibility graph, where a set of vertex disjoint odd cycles is considered independent if no donor-patient vertex in one cycle has mutual compatibility with any donor-patient vertex in another cycle.
This result has two implications.

\begin{itemize}
 \item  When the compatibility graph is generated using a standard random graph model, the maximum number of independent odd cycles is $\mathcal{O}(\log |V|)$. In the class of compatibility graphs based on type presentation, we show that this number is at most one-third of the number of types. Thus, in these graphs, only a small number of additional donors are needed to stabilize the exchange.
 
    \item In the special case where the underlying compatibility graph is bipartite, our result implies that a weak-core allocation always exists.  Moreover, we show that bipartite graphs capture a broad class of utility functions in one-sided exchange economies, which we term \emph{binary assignment valuations.} This framework generalizes the results of \citet{echenique2024stable}.

\end{itemize}

Third, we show that for small cyclic exchanges, the required number of additional altruistic donors are at most $(t+1)(\Delta -1)$ per participating organization, where $t$ is the number of types in the type-representation and $\Delta$ is the cycle length bound.
Thus, when the number of compatibility types is fixed, regardless of the overall market size, in the worst case, it suffices to have at most a constant number of altruistic donors per organization and a constant number of donors for the whole instance, if the number of organizations is also a constant - for example in Europe, most programs consist of at most 3 or 4 countries.

\Xomit{
lower bound

\begin{table}[h!]
\centering
\begin{tabular}{lccc}
\hline
\textbf{Model} & \textbf{Pairwise Exchange} & \textbf{Unlimited Cyclic Exchange} & \textbf{Delta-Exchange} \\
\hline
 & $\frac{|V|}{18} - 3$ & $\frac{|V|}{10} - 1$ & $O(|V|)$ \\
\hline
\end{tabular}
\caption{Exchange bounds for different exchange types.}
\label{tab:exchange_bounds}
\end{table}

\begin{table}[h!]
\centering
\begin{tabular}{lccc}
\hline
\textbf{Model} & \textbf{Pairwise Exchange} &  & \\
\hline
Random Graph & $O(\log n)$ & & \\
Type-based Graph & $t/3$ & & \\
Bipartite Graph & Existence of Weak Core & & \\
\hline
\end{tabular}
\caption{Comparison of exchange properties across different graph models.}
\label{tab:exchange_models}
\end{table}
}

We complement our theoretical findings with an extensive empirical analysis using data and simulations. We generate partition exchange economies using a state-of-the-art instance generator \cite{delorme2022improved,pettersson2021kidney}, calibrated to mirror real-world characteristics such as blood-type distributions and tissue-type incompatibilities. Our empirical results reveal a striking contrast between theoretical worst-case bounds and practical reality.

First, we find that the number of additional altruistic donors required to stabilize the market is negligible. Our simulations on instances with up to 500 patient-donor pairs show that the weak core is naturally non-empty in the vast majority of cases. In the rare instances where the core is empty, a single altruistic donor is sufficient to restore stability. Furthermore, even for the much stricter notions of the strong core and the transferable utility (TU) core, we observe that stability can be achieved with a minimal number of altruists—typically fewer than 1\% of the patient pool.

Second, we demonstrate that our concept of supplemented core stability is also compatible with current practice. Real-world kidney exchange programs typically prioritize hierarchical (lexicographic) objectives, such as maximizing the number of transplants followed by maximizing the number of cycles (so that cycles are shorter), maximizing same blood type transplants, etc \cite{biro2021modelling}. We show that enforcing core stability imposes almost no "price" on these objectives. Specifically, when we optimize for the standard lexicographic goals first and then seek to stabilize the outcome, the number of required altruists remains small (often fewer than 3 donors and on average less than 1\% of the patient pool), ensuring that the solution is both medically optimal and immune to coalitional deviations.

Finally, we address the computational challenges. While our theoretical guarantee relies on Scarf's lemma—making direct computation intractable in general—we show that this barrier can be effectively overcome in practice. We develop a fast heuristic framework based on integer programming to compute and verify core solutions.  Our framework combines several key innovations: a cycle formulation in which variables correspond to a pre-computed set of feasible cycles, restrictions on the size of blocking coalitions, iterative generation of coalition constraints, and the introduction of altruistic donors when infeasibility is detected. Together, these techniques dramatically reduce computational complexity and allow us to solve large and challenging instances efficiently, even on mid-range hardware. As a result, we can guarantee core stability against realistic coalitions (e.g., those involving up to four distinct organizations). These results demonstrate that, although our theoretical guarantees rely on fixed-point arguments, they can be translated into practical and scalable algorithms that perform effectively in real-world settings.

Our results highlight the crucial role of altruistic donors—not only in increasing match rates and enabling exchange chains, but also in promoting market stability. In the United States, approximately 300-400 altruistic donors participate annually, compared to about 6,000 patient-donor pairs in the UNOS database, and 164 altruistic donors versus 1,265 patient-donor pairs in the NKR dataset (\citet{agarwal2019market}). Despite their significantly smaller numbers, our findings show that altruistic donors can have a disproportionately large impact on the efficiency and robustness of the exchange. These insights highlight the need to more efficiently recognize and incorporate altruistic donors in the design of kidney exchange platforms.


The paper is organized as follows. After discussing related work, we introduce the model, the main solution concept, and the methodological preliminaries used to establish our general results. Section~\ref{sec:low} provides worst-case lower bounds on the number of altruistic donors required. Section~\ref{sec:pairwise} studies pairwise exchange in general graphs, while Section~\ref{sec:cycle} analyzes cyclic exchange. Section~\ref{sec:bipartite} focuses on an application to one-sided markets, and Section~\ref{sec:simulation} presents simulation results. Section~\ref{sec:conclude} concludes. Proofs omitted from the main text are collected in the Appendix.

\subsection*{Related Literature}

The success of kidney exchange has inspired a rich literature on market design challenges, beginning with \citet{roth2004kidney}. Research spans various exchange structures—such as pairwise and cyclic exchanges—and participant models, from individual agents to coordinating organizations. While individual-level incentives are relatively well understood (\citet{roth2004kidney, sonmez2013market}), multi-hospital and international exchanges present growing challenges (\citet{ashlagi2012new, ashlagi2014free}). Hospitals, as key decision-makers, face more complex choices than individual donor-patient pairs, and their control over multiple pairs introduces strategic considerations that can significantly affect efficiency (\citet{agarwal2019market}).

The key distinction between our work and the growing literature on kidney exchange among organizations lies in our positive result for core-stable solutions. In contrast, existing studies either emphasize negative results regarding core stability or adopt weaker solution concepts, such as individual rationality. For example, \citet{ashlagi2014free} focus on individual rationality rather than addressing core stability.  Similarly, \citet{ashlagi2015mix} analyze incentive issues and provide multiplicative efficiency bounds, but their framework does not incorporate core stability constraints.

Our model builds on the frameworks developed in \citet{csaji2024ntu} and \citet{biro2019generalized}. However, these studies primarily focus on negative results regarding core stability. To the best of our knowledge, no existing work has established a positive result on the existence of (near-feasible) core-stable outcomes in the kidney exchange setting. Even in the restricted case of bipartite graphs, our result on the non-emptiness of the weak core is novel and constitutes a strict generalization of \citet{echenique2024stable}.
Our approach differs from this literature in its emphasis on core stability and in its use of an additive error to measure near feasibility. Our key insight is that even a few directed donors, deployed strategically by a centralized platform, can stabilize the market.

There is also a large literature on the dynamic aspects of kidney exchange, including \citet{anderson2017efficient,dickerson2014balancing,akbarpour2020thickness}, which primarily focuses on trade-offs between efficiency, waiting time, and fairness, while largely overlooking more complex issues such as core-stability—the focus of our paper. Related to dynamic settings,  \citet{klimentova2021fairness} and \citet{biro2019generalized} proposed a credit system for international kidney exchange programs (IKEPs) in Europe, using cooperative game theory to assign fair transplant targets each round and carrying deviations as credits to promote long-term fairness.  We instead propose round-by-round core allocations to ensure fairness and incentive compatibility without relying on future adjustments.

From a practical operational perspective, the literature has explored various computational methods for finding good solutions to kidney exchange programs (e.g., \citet{mincu2021ip, ashlagi2024designing, druzsin2024performance}). However, these approaches are primarily based on simulations and generally lack theoretical guarantees. 
To obtain theoretical results, we adopt the Relax-and-Round method of \citet{nguyen2018near, nguyen2021stability}, that uses Scarf’s lemma to relax stability while preserving feasibility. Our approach differs in two key ways: we incorporate endowments and introduce a new rounding procedure. Unlike prior work that only allow for endowment of small bundles (\citet{jantschgi2025competitive}), we consider hospitals endowed with multiple pairs, enabling large-scale exchanges.
Our rounding must therefore respect the structure of the compatibility graph and constraints on allowable exchanges.


\section{Model and Preliminaries}

In this Section we describe our mathematical model for Kidney Exchange, motivated by the works of \citet{csaji2024ntu} and \citet{biro2019generalized}.  
Unlike \citet{roth2004kidney}, which models each agent as a donor-patient pair, we consider agents as organizations, such as hospitals in the U.S. exchange system or organizations in the international exchange system. 
We consider an exchange involving a set of \( n \) organizations, denoted by \( \N =\{1,\dots ,n\}\). Each organization \( i \in \N \) is endowed with a set \( V^i \) consisting of donor-recipient pairs, unpaired patients, and altruistic donors. The sets \( V^i \) are disjoint across organizations. Within each \( V^i \), let \( U^i \subset V^i \) denote the set of 
non-altruistic  donor-patient pairs and unpaired patients.  The remaining elements, \( V^i \setminus U^i \), are altruistic donors  willing to donate without requiring a kidney in return.

\paragraph{Compatibility graphs.}
Compatibility is modeled via a directed graph, denoted by  \( \mathcal{G} = (V, E) \), where  $V=V^1\cup ..\cup V^n$,  and the set of edges, $E$, captures the compatibility of exchange. Specifically, 
\begin{itemize}
    \item For two non-altruistic donors—patient-donor pairs $u, v$, a directed edge \( (u, v) \in E \) indicates that the donor in pair \( u \) is compatible with the patient in pair \( v \), meaning a transplant from \( u \) to \( v \) is feasible.  
    \item To simplify notation and the description of feasible exchanges, we can model altruistic donors as special donor–patient pairs by introducing a dummy patient who is compatible with any donor. This allows us to define directed edges following the same rules as before.
    \item Similarly, we can model unpaired patients by adding  a dummy donor who is only compatible with the dummy patient associated with the altruistic donors.
    
\end{itemize}

The bounds we derive on the number of altruistic donors required depend on two key parameters of 
 the compatibility graph: the maximum number of independent odd cycles and the number of types in an optimal type-based representation. As discussed in the introduction, both parameters are typically small in practice, especially in kidney exchange applications. To formalize this, we examine two important models of compatibility graphs.

\textit{Random graphs:} Using random graphs is a common framework for modeling compatibility of kidney exchange.
Mathematically, donor-recipient pairs are partitioned into a finite set of groups \( \Phi \), based on characteristics such as blood type and the sensitization levels of both the donor and the patient. For any two groups \( i, j \in \Phi \) (not necessarily distinct), let \( p_{ij} \) denote the probability that a donor from group \( i \) is compatible with a patient from group \( j \). If \( p_{ij} = 0 \), then compatibility is ruled out---for instance, due to incompatible blood types. Otherwise, \( p_{ij} \) may depend on additional clinical factors, such as the patient's calculated Panel Reactive Antibody (cPRA) level.  A random graph is then generated by assigning donor-recipient pairs to groups according to a distribution calibrated to match empirical data. For any two donor-recipient pairs \( u \) from group \( i \) and \( v \) from group \( j \), a directed compatibility edge from \( u \) to \( v \) is added independently with probability \( p_{ij} \).

\textit{Type-representation of graphs:} Another related but conceptually distinct approach is the {deterministic type-based model}. In contrast to the probabilistic model, compatibility here is deterministic: whether one pair is compatible with another is fully determined by their assigned types. However, unlike in the random graph model above where types reflect observable donor and patient characteristics (e.g., blood type, cPRA), the ``type'' in this setting is an abstract construct introduced solely to parametrize the compatibility graph. It does not necessarily correspond to any specific medical or demographic attributes. Specifically,  we say that a (di)graph $\graph =(V,E)$ \emph{can be represented by a set of types $\typeset$}, if it holds that there exists a labeling $f:V\rightarrow T$, such that for any two types $t_1,t_2\in T$, either $\{(u,v)\mid u\in f^{-1}(t_1), v\in f^{-1}(t_2)\}\subseteq E$ or $\{ (u,v)\mid u\in f^{-1}(t_1), v\in f^{-1}(t_2)\}\cap E=\emptyset$. That is, the existence of a (directed) edge only depends on the types of the endpoints. \citet{dickerson2017small} conducted simulations showing that kidney exchange graphs can usually be represented by only a few types. We may assume that for any type $t_i$, there is no edge $(u,v)$ with $f(u)=f(v)=t_i$, as donor-recipient pairs are not added to KEPs, if they are compatible with each other.

\paragraph{Exchanges.}
We consider two models of kidney exchange: \textbf{pairwise exchange} and \textbf{cyclic exchange}. In pairwise exchange, only mutually compatible donor-patient pairs can trade kidneys. Cyclic exchange allows longer cycles, where each donor gives to the next patient and receives a kidney from the previous donor. We assume a maximum cycle length of \( \Delta \), with \( \Delta = 2 \) corresponding to pairwise exchange. In practice, \( \Delta = 3 \) is typical, while values above 5 are rare due to the logistical and medical challenges of coordinating multiple simultaneous surgeries.\footnote{Even in the case of altruistic donors, cycles become chains, and operational difficulties with chains are less severe. Longer chains are also rare in practice (\citet{agarwal2019market}). Moreover, chains of size 10–20 typically result from the accumulation of exchanges over an extended period. For example, the National Kidney Registry reported a 25-transplant chain that was built up over several months. Our model aims to capture the clearing mechanism over a relatively shorter period, during which longer chains can be broken into smaller segments. The donor at the end of a chain in one period can then be viewed as an altruistic donor initiating a new chain in the next period. }

Let $\mathcal{C}_{\Delta}$ denote the cycles of length at most $\Delta$. Given a subset of vertices $W\subseteq V$, denote the induced compatibility graph on $W$ as  $\mathcal{G}[W]$. 
We call $\Ex$ an exchange among $W$ if $\Ex$ is an union of disjoint  cycles of length at most $\Delta$  in the induced compatibility graph  $\mathcal{G}[W]$.  We say that $y^*\in [0,1]^{\mathcal{C}_{\Delta}}$ is a \emph{fractional exchange }if it holds that for any vertex $v\in V$, 
$\sum\limits_{c\in \mathcal{C}_{\Delta}\mid v\in c}y^*_c\le 1.$

Given an exchange  $\Ex=(V_{\Ex}, E_{\Ex})$, the utility of an organization $i$ is given by the number of their patients receiving a kidney, that is  
$$
u_i(\Ex):=|V_{\Ex}\cap U^i|. 
$$ 

Overall, we refer to the economy as a \emph{partition exchange economy}, represented by the tuple $(\G, \Delta, \{V^i, U^i\}_{i=1}^n)$.

\begin{definition}
    Given a partition exchange economy, an exchange $\Ex$ is \emph{individually rational}, if $u_i(\Ex)\ge u_i(\Ex_i)$ for any $i\in \N$ and exchange $\Ex_i \subseteq \G [U^i]$. That is, no player can improve its utility with an exchange of its own.
\end{definition}

\begin{definition}
Given a partition exchange economy, an exchange $\Ex$ among $V$,  and a coalition $P\subseteq N=\{1,..,n\}$, we say that $P$ is a \emph{(strong) blocking coalition to $\Ex$}, if there exists an exchange $\Ex'$ in the graph $\mathcal{G}[\cup_{i\in P}V_i]$ such that $u_i(\Ex')>u_i(\Ex)$ for all $i\in P$. We say that $\Ex$ is \textit{in the (weak) core}, if there is no blocking coalition to $\Ex$. 
\end{definition}

Note that in the definition above, some papers in the literature refer to such a block as a strong block and the corresponding concept as the weak core. For convenience, in this paper we often refer to the weak core as simply the core. 

As we show in this paper, the core can be empty in both pairwise and cyclic exchanges. To obtain a positive result, we either impose a restriction on the compatibility graph—discussed in Section~\ref{sec:bipartite}—or relax the notion of the core by introducing what we call the supplemented core. To this end, we introduce a special agent, indexed by $0$, representing the centralized market designer, with  \( V^0 \) denoting the designer’s endowment of \emph{altruistic} donors.  Let \( \GG \) be the extended compatibility graph that incorporates the additional donors in \( V^0 \). The designer \emph{is not a player of the game and is not allowed to participate in any coalition}. Their sole role is to stabilize the exchange by providing additional resources when necessary.

Next we introduce our main concept.
    
\begin{definition} Given additional altruistic donors \( V^0 \), an exchange \( \Ex^* \) in the extended compatibility graph \( \GG \) is a \( V^0 \)-supplemented core if it is not blocked by any coalition \( P \subseteq N =\{1,..,n\} \). If \( |V^0| \leq d \), we also refer to \( \Ex^* \) as a \( d \)-supplemented core exchange.
\end{definition} 


    

\begin{definition}
Given a set of additional altruistic donor \( V^0 \), an exchange \( \Ex^* \) in the extended compatibility graph \( \GG \) is \emph{Pareto optimal} if there does not exist another exchange \( \Ex' \) in \( \GG \) such that all organizations are weakly better off and at least one organization is strictly better off. 
\end{definition}

We show in a simple example how an additional altruistic donor can guarantee the existence of a core exchange.

\begin{example}
    Consider the partition economy illustrated in Figure~\ref{fig:pairwise-nocore}, featuring three organizations represented by red triangle,  green square, and  blue circle, respectively.  This instance, from \citet{csaji2024ntu}, demonstrates an empty core. However, adding a single altruistic donor (the black vertex) makes the bold matching $\Ex$ a core outcome. To see this, observe that any organization alone can cover all but 3 of its vertices, any two organizations can cover all but 2 of their vertices and the three organizations together can cover all but 5 of their vertices. Since in the bold $V^0$-supplemented exchange $\Ex$, one organization has 2 vertices unmatched, and the others only one, no single organization, nor the 3 organizations together block $\Ex$. Furthermore, no 2 organizations block either, as then they could leave at most $(2-1)+(1-1)=1$ of their vertices unmatched, which is impossible.
\end{example}


\Xomit{
\begin{figure}
    \centering
    \includegraphics[width=0.4\linewidth]{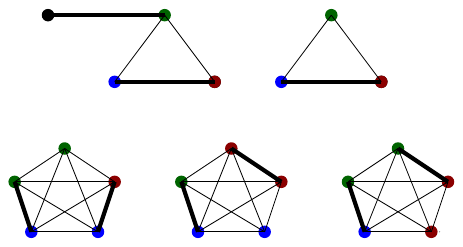}
    \caption{A partition exchange economy with pairwise exchanges without a core, but with a $1$-supplemented core. The red, green and blue vertices denote the organizations, while the black vertex in the upper left corner is $V^0$. The bold matching shows a $V^0$-supplemented core.}
    \label{fig:pairwise-nocore}
\end{figure}
}

\begin{figure}[h!]
\centering
\begin{tikzpicture}[scale=0.5, transform shape,
    base/.style={minimum size=7pt, inner sep=0pt},
    black_star/.style={base, star, star points=5, fill=black, scale=2},
    green_square/.style={base, rectangle, fill=green!50!black, scale=1.5},
    red_triangle/.style={base, regular polygon, regular polygon sides=3, fill=red!70!black, scale=2},
    blue_circle/.style={base, circle, fill=blue, scale=1.7},
    thin_edge/.style={draw, thin},
    thick_edge/.style={draw, line width=2pt}
]

    \begin{scope}[shift={(0,4)}]
        \node[black_star] (n1) at (0,1.5) {};
        \node[green_square] (n2) at (2,1.5) {};
        \node[blue_circle] (n3) at (1,0) {};
        \node[red_triangle] (n4) at (3,0) {};
        
        \draw[thick_edge] (n1) -- (n2);
        \draw[thick_edge] (n3) -- (n4);
        \draw[thin_edge] (n2) -- (n3);
        \draw[thin_edge] (n2) -- (n4);
    \end{scope}

    \begin{scope}[shift={(5,4)}]
        \node[green_square] (m1) at (1,1.5) {};
        \node[blue_circle] (m2) at (0,0) {};
        \node[red_triangle] (m3) at (2,0) {};
        
        \draw[thick_edge] (m2) -- (m3);
        \draw[thin_edge] (m1) -- (m2);
        \draw[thin_edge] (m1) -- (m3);
    \end{scope}

    \begin{scope}[shift={(0,0)}]
        \foreach \a [count=\i] in {90,162,234,306,18} \coordinate (p\i) at (\a:1.5);
        \foreach \i in {1,...,5} \foreach \j in {\i,...,5} \draw[thin_edge] (p\i) -- (p\j);
        
        \node[green_square] at (p1) {}; \node[green_square] at (p2) {};
        \node[blue_circle] at (p3) {}; \node[blue_circle] at (p4) {};
        \node[red_triangle] at (p5) {};
        
        \draw[thick_edge] (p2) -- (p3); \draw[thick_edge] (p4) -- (p5);
    \end{scope}

    \begin{scope}[shift={(4.5,0)}]
        \foreach \a [count=\i] in {90,162,234,306,18} \coordinate (q\i) at (\a:1.5);
        \foreach \i in {1,...,5} \foreach \j in {\i,...,5} \draw[thin_edge] (q\i) -- (q\j);
        
        \node[red_triangle] at (q1) {}; \node[green_square] at (q2) {};
        \node[blue_circle] at (q3) {}; \node[blue_circle] at (q4) {};
        \node[red_triangle] at (q5) {};
        
        \draw[thick_edge] (q2) -- (q3); \draw[thick_edge] (q1) -- (q5);
    \end{scope}

    \begin{scope}[shift={(9,0)}]
        \foreach \a [count=\i] in {90,162,234,306,18} \coordinate (r\i) at (\a:1.5);
        \foreach \i in {1,...,5} \foreach \j in {\i,...,5} \draw[thin_edge] (r\i) -- (r\j);
        
        \node[green_square] at (r1) {}; \node[green_square] at (r2) {};
        \node[blue_circle] at (r3) {}; \node[red_triangle] at (r4) {};
        \node[red_triangle] at (r5) {};
        
        \draw[thick_edge] (r2) -- (r3); \draw[thick_edge] (r1) -- (r5);
    \end{scope}
\end{tikzpicture}
\caption{A partition exchange economy with pairwise exchanges that has no core but admits a $1$-supplemented core. The three organizations are represented by vertices with distinct shape (circle, triangle, square). The black star vertex in the upper-left corner represents $V^0$. The bold matching shows a $V^0$-supplemented core.}
 \label{fig:pairwise-nocore}
\end{figure}
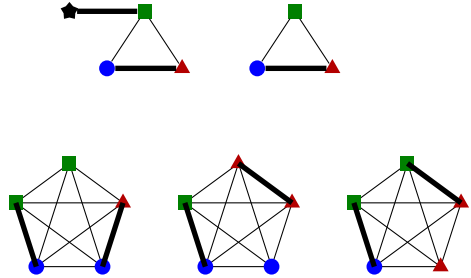

It is well known that in exchange economies with satiated utilities, the strong core is an unreasonably demanding solution concept. 
\begin{remark}\label{rem:score}
Consider the following example: organization $A$ has $k$ altruistic donors who are compatible with any patient. There are $k+1$ other organizations, $B_1, \dots, B_{k+1}$ with $k$ donor-recipient pairs, each of which could fully utilize all $k$ donors of $A$ to meet their own needs. Moreover, no exchange is possible among the organizations $B_1, \dots, B_{k+1}$ without the help of additional donors. No matter how organization $A$ distributes the donors among the other organizations, there will always be a weakly blocking coalition between $A$ and one of the $B_i$ organizations that has none of its patients matched. To achieve a strong core allocation, we would need to guarantee that all $B_i$ organizations match all their donors, otherwise, $B_i$ and $A$ would still weakly block. Hence, we need at least $k^2$ altruistic donors to ensure a strong core solution, which is roughly the number of donor-recipient pairs in the original instance (which is $k^2+2k$), while for the weak core, 1 is sufficient.
\end{remark}
 The requirement in Remark~\ref{rem:score} is clearly unrealistic. Therefore, we focus on the weak core and Pareto optimality, rather than the strong core. Remark~\ref{rem:score} also shows that to have a strong core solution, we may need to add $|V^i|$ many additional donors for some of the organizations $i$ -- note that adding $|V^i|$ many additional donors for $i$ ensures that $i$ cannot strictly improve anymore.

\subsection{Supplementary Lemmas}\label{sec:scarf}

Our results rely on Scarf’s lemma together with a rounding argument \citet{nguyen2018near}. We prove below three technical lemmas that underpin the analysis. The first is Scarf’s lemma, the second shows how to round a fractional solution to an integral one, and the third is a basic linear-algebraic fact from linear programming that we use to analyze the quality of the rounding procedure.

For a matrix $Q$, let $Q_{ij}$ denote the $j$-th element of the $i$-th row of $Q$, $Q_i$ denote the $i$-th row of $Q$ and $Q_{\cdot j}$ denote the $j$-th column of $Q$.

\begin{lemma}[\citet{scarf1967core}] Let $Q$ be an $n\times m$ nonnegative matrix, such that every column of $Q$ has a nonzero element and let $q\in \mathbb{R}^n_+$. Suppose that every row $i$ has a strict ordering $\succ_i$ on those columns $j$ for which $Q_{ij}>0$. Then there is an extreme point of $\{x\in \mathbb{R}^m\mid  Qx\le q, \; x\ge 0\}$, that dominates every column in some row, where we say that $x\ge 0$ dominates column $j$ in row $i$, if $Q_{ij}>0$, $(Q_i)^T x=q_i$ and $k\succ_i j$ for all $j\ne k\in \{ 1,\dots  ,m\}$, such that $Q_{ik} x_k>0$. Also, this extreme point can be found algorithmically.
\end{lemma}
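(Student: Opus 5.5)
The plan is the classical Scarf argument, run as a Lemke--Howson-type path-following scheme that pairs a cardinal basis with an ordinal basis. First I normalize. Scaling each row $i$ by $1/q_i$ (rows with $q_i=0$ force $x_j=0$ on their support and can be handled by first deleting those columns, or by perturbing $q$), I may assume $q>0$. I adjoin slack columns $e_1,\dots,e_n$, giving the system $[I\mid Q]\binom{s}{x}=q$ with $s,x\ge 0$. In row $i$ I extend $\succ_i$ so that every column $j$ with $Q_{ij}=0$, and every slack $e_k$ with $k\neq i$, ranks above all columns with $Q_{ij}>0$, while $e_i$ ranks below everything. I break the remaining ties among the $Q_{ij}=0$ columns by a common fixed order.

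I then apply a standard lexicographic perturbation of $q$ to make the polytope nondegenerate. This makes every feasible basis have exactly $n$ strictly positive basic variables.

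The two notions of basis are defined as follows.
\begin{itemize}
\item A \emph{feasible basis} is a set $B$ of $n$ columns forming a basic feasible solution.
\item An \emph{ordinal basis} is a set $O$ of $n$ columns such that for every column $j$ there is a row $i$ with $j\succeq_i \min_{\succ_i} O$, where $\min_{\succ_i}O$ is the least preferred member of $O$ in row $i$.
\end{itemize}
The goal is a set that is simultaneously a feasible and an ordinal basis. Given such a set, let $x$ be the $Q$-part of the basic solution. For each column $j$ pick the row $i$ certified by ordinality. Then $j$ is weakly preferred in row $i$ to every column carrying positive weight there.

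Tightness comes from the slacks. If the row-$i$ minimum were some $e_k$ with $k\ne i$, or a column with zero entry in row $i$, the ranking convention would contradict ordinality, because every column would then have to be ranked above it. So the row-$i$ minimum is either $e_i$ or a genuine $Q$-column. In the first case no column is ranked below $e_i$ in row $i$, so row $i$ cannot serve as the dominating row for any $Q$-column; the certifying row is therefore one where the minimum is a positive $Q$-column. Such a column is basic, hence has positive value, and so $s_i$ is nonbasic and row $i$ is tight. This is exactly the domination statement. Extremality is automatic, since $x$ comes from a basic feasible solution.

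The existence argument is the core of the proof. I start from $B_0=\{e_1,\dots,e_n\}$, which is feasible because $q>0$. I consider pairs $(B,O)$ with $B$ feasible, $O$ ordinal, and $|B\cap O|=n-1$, where the missing element is $e_1$ in $B$. Two pivot facts drive the walk.
\begin{itemize}
\item \emph{Cardinal pivot:} by nondegeneracy, bringing any column into a feasible basis forces a unique column out.
\item \emph{Ordinal pivot:} removing any column from an ordinal basis admits a unique replacement keeping it ordinal, except when the removed column together with the remaining ones hits the boundary of slack columns.
\end{itemize}
Alternating these pivots defines a graph on such pairs in which every vertex has degree $2$, except the starting pair built from $B_0$ (degree $1$) and the completed pairs with $B=O$. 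Since the graph is finite, the path from the start must end at a pair with $B=O$.

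The main obstacle is the ordinal pivot step: proving uniqueness of the replacement and ruling out cycling back to slack-only configurations. For this I would use strict orders together with a genericity assumption making the ordinal side ``nondegenerate'' as well, namely distinct rankings across rows for the tie-broken zero-entry columns. I would then follow Scarf's original combinatorial lemma, which shows that the column entering an ordinal basis is determined by the row whose minimum changed. Algorithmic findability then follows directly, since the path is traced constructively by these pivots.
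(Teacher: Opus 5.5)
The paper does not prove this lemma; it quotes it from Scarf (1967). Your outline (slack columns, lexicographic perturbation, alternating cardinal and ordinal pivots on pairs with $e_1\in B\setminus O$) is Scarf's own argument, so the route is the intended one. As written, however, your central definition points the wrong way. With ``$j\succeq_i \min_{\succ_i}O$'', the slack basis $B_0=\{e_1,\ldots,e_n\}$ is itself ordinal, because every column is $\succeq_1 e_1$. Your walk would then stop at once with $x=0$, which dominates nothing when $q>0$. The correct condition is $\min_{\succ_i}O\succeq_i j$. The conclusion you then draw should be that every column with positive weight in row $i$ is weakly preferred \emph{to} $j$; you wrote the reverse, which is not the paper's notion of domination. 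With this fixed, the starting pair is $(B_0,O_0)$ with $O_0=\{e_2,\ldots,e_n,k_1\}$, where $k_1$ is the $\succ_1$-best $Q$-column. For $O_0$ to be ordinal and for the pivot lemma to work, the foreign slacks $e_k$ ($k\neq i$) must rank strictly above \emph{every} $Q$-column in row $i$ (Scarf's $c_{ii}<c_{ik}<c_{ij}$). Your ranking leaves their position relative to the zero-entry columns unspecified.

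Several other steps need repair.
\begin{itemize}
\item Your reason why, in the final basis $S$, the row-$i$ minimum is not a foreign slack or a zero-entry column is invalid. Ordinality does not exclude this: $O_0$ is ordinal, and its row-$1$ minimum is a zero-entry column whenever row $1$ has a zero. What excludes it is feasibility. Since $q_i>0$, some column of $S$ has a positive entry in row $i$, and all such columns rank below foreign slacks and zero-entry columns.
\item Tightness of row $i$ follows simply from $e_i\notin S$. It should not be derived from positivity of basic values, which fails once you remove the perturbation. You should also note that the final basis is feasible for the original $q$, so $x=A_S^{-1}q$ is a vertex of the original polytope, and that the domination argument only uses $x_k>0\Rightarrow k\in S$.
\item The cardinal pivot needs a leaving column to exist. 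This uses boundedness of $\{x\ge 0: Qx\le q\}$, which is immediate from $Q\ge 0$ with no zero column.
\item The ordinal pivot lemma is the real content of Scarf's lemma, and you only defer it. It says: for an ordinal basis $O$ and $j\in O$ with $O\setminus\{j\}$ not consisting solely of slacks, there is exactly one $j'\neq j$ with $O-j+j'$ ordinal. Its proof needs only strict orders within each row, so that each member of an ordinal basis is the minimum of exactly one row; ``distinct rankings across rows'' is not required.
\item ``Cycling back to slack-only configurations'' is not a separate difficulty. At a pair with $e_1\in B\setminus O$, the set $O\setminus\{j\}=B\setminus\{e_1\}$ consists only of slacks exactly when $B=B_0$, that is, at the start node.
\end{itemize}
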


Let $\gamma_c^i=|U^i\cap c|$ for a cycle $c$ and organization $i$.
Recall that  $\mathcal{C}_{\Delta}$ denotes the set of cycles of length at most $\Delta$. The following key Lemma -- building on Scarf's Lemma -- provides an existential guarantee for a special fractional solution: one that we can round off to find a $V^0$-supplemented core.

\begin{lemma}[Rounding Lemma]\label{thm:ApplyScarf}
   Given a partition exchange economy with cycles of size at most 
   $\Delta$, there exists a fractional exchange $y^*$ in $\G$ such that: for any set $V^0$ of altruistic donors and any (deterministic) exchange $\Ex^*$ in $\GG$, the exchange $\Ex^*$ belongs to the $V^0$-supplemented core if
   $$
   u_i(\Ex^*)\ge \big \lfloor \sum\limits_{c\in \mathcal{C}_{\Delta}}\gamma_c^i y^*_c  \big\rfloor  \text{ for all } i\in \N.
   $$
\end{lemma}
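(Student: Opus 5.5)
We will obtain $y^*$ from Scarf's lemma applied to a matrix indexed by vertices and coalition-exchanges, in the style of \citet{nguyen2018near}. Columns correspond to pairs $(P,\Ex')$, where $P\subseteq N$ is a coalition and $\Ex'$ is an exchange in $\G[\cup_{i\in P}V^i]$ (cycles of length at most $\Delta$). Rows correspond to the vertices $v\in V$, with $Q_{v,(P,\Ex')}=1$ if $v$ is covered by $\Ex'$, and $q=\mathbf{1}$. We also add rows indexed by organizations $i\in N$, with $Q_{i,(P,\Ex')}=1$ if $i\in P$ and $q_i$ large, so that no column is zero.

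Concretely, it is cleaner to use one row per organization $i$ and one column per $(P,\Ex')$, but feasibility must be encoded through vertex-capacity rows. So we keep vertex rows as the feasibility constraints and give organization $i$'s preferences on the rows of its own vertices. Each vertex $v\in V^i$ ranks the columns covering $v$ (or containing $i$) by $u_i(\Ex')$, breaking ties by a fixed strict order. To guarantee that every column has a nonzero entry, we include the singleton columns $(\{i\},\emptyset)$ via an auxiliary row for $i$.

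Scarf's lemma gives an extreme point $x^*$ that dominates every column. We then set
\[
y^*_c=\sum_{(P,\Ex')\ni c} x^*_{(P,\Ex')} .
\]
The vertex rows give $\sum_{c\ni v} y^*_c\le 1$, so $y^*$ is a fractional exchange in $\G$. It depends only on the economy, not on $V^0$.

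Now fix $V^0$ and $\Ex^*$ with $u_i(\Ex^*)\ge\lfloor\sum_c\gamma^i_c y^*_c\rfloor$. Suppose $P$ blocks via $\Ex'$, so that $u_i(\Ex')\ge u_i(\Ex^*)+1>\sum_c\gamma^i_cy^*_c$ for all $i\in P$, using integrality of utilities. The exchange $\Ex'$ lies in $\G[\cup_{i\in P}V^i]$, because the designer is not a player, so $(P,\Ex')$ is a column. By domination, some row $r$ dominates this column: the row is tight and every column in its support is weakly preferred to $(P,\Ex')$. If $r$ is a vertex row (or the auxiliary row) belonging to organization $i\in P$, then every column in the support has value $u_i\ge u_i(\Ex')$ for $i$. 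Tightness then gives
\[
\sum_{(P',\Ex'')} x^*_{(P',\Ex'')}\,u_i(\Ex'')\ \ge\ u_i(\Ex').
\]
The main obstacle is to convert this row-tightness into the aggregate inequality $\sum_c\gamma^i_c y^*_c\ge u_i(\Ex')$. The left-hand side above is the mass that organization $i$ receives from columns through one vertex, not the total $\sum_c\gamma_c^iy_c^*$. Fixing this is where the design of the rows matters.

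I would first try replacing the vertex rows for $i$ by a single aggregated row per organization, with entries $Q_{i,(P,\Ex')}=|V(\Ex')\cap V^i|$ and bound $|V^i|$. This row only implies aggregate feasibility, so I would keep the vertex rows separately with preferences placing the organization-relevant columns suitably. Alternatively, I would scale $x$ so that organization rows state $\sum_{(P,\Ex')\ni i}x=1$. Then tightness of $i$'s row together with domination yields $\sum x\,u_i(\Ex'')\ge u_i(\Ex')$, which equals $\sum_c\gamma^i_cy^*_c$ provided $y^*$ is defined by summing only over columns containing $i$. Reconciling this normalization with the vertex-capacity rows, so that $y^*$ is both feasible and correctly accounts each organization's utility, is the delicate step. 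I expect the resolution to be a matrix combining "$i$ is in a convex combination of coalitions" rows with vertex rows that are never the dominating row, for example because vertex rows are slack or rank columns compatibly. Once dominance is forced onto an organization row, we get $\sum_c\gamma^i_cy^*_c\ge u_i(\Ex')$, contradicting the strict inequality above.
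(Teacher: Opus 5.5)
\paragraph{Gap: the key inequality is left open.} Your construction is essentially the paper's: vertex rows with $q=\mathbf{1}$, coalition--exchange columns, row $v\in V^i$ ranking the columns that cover $v$ by $u_i$, $y^*_c$ defined by aggregating $x^*$ over columns containing $c$, and the integrality endgame $u_i(\Ex')\ge u_i(\Ex^*)+1>\sum_c\gamma^i_cy^*_c$. The gap is that you stop at the step you call the main obstacle and never close it. The redesigns you sketch are left as expectations: aggregated organization rows, normalizing $\sum_{(P,\Ex')\ni i}x=1$, and forcing domination onto organization rows. As written, the lemma is therefore not proved.

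\paragraph{The missing step.} The obstacle is illusory, and your very first construction already works. The ``mass through one vertex'' is not a different quantity from the aggregate; it is a lower bound for it.
\begin{itemize}
\item Because the cycles of an exchange are vertex-disjoint, $\sum_{c\in\Ex''}\gamma^i_c=u_i(\Ex'')$. Hence $\sum_c\gamma^i_cy^*_c=\sum_{(P',\Ex'')}x^*_{(P',\Ex'')}\,u_i(\Ex'')$, where the sum runs over \emph{all} columns; columns with $i\notin P'$ contribute $0$.
\item Every term is nonnegative, so this total is at least the sub-sum over the columns covering the dominating vertex $v\in V^i$.
\item By tightness of row $v$, those weights sum to $1$. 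By domination, every such column with positive weight has $u_i(\Ex'')\ge u_i(\Ex')$.
\item Hence $\sum_c\gamma^i_cy^*_c\ge u_i(\Ex')$, which contradicts your strict inequality. This is exactly how the paper concludes.
\item You also get $i\in P$ for free: domination requires $(P,\Ex')$ to have a nonzero entry in row $v$, so $v$ is covered by $\Ex'$ and its owner belongs to $P$.
\end{itemize}

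\paragraph{Unnecessary machinery.} The auxiliary organization rows and the empty singleton columns are not needed. Just drop columns whose exchange covers no vertex; this also removes any zero columns. This loses nothing, since a blocking exchange gives each member of $P$ utility at least $1$ and so is never one of the dropped columns. The paper's choice of a representative exchange per utility vector, and its choice of $x^*$ maximizing $|x^*|_1$, are likewise not needed for this lemma.
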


\begin{proof}

     Let us create an instance of Scarf's Lemma as follows. For the matrix $Q$, we create a row for each vertex $v\in V$. Furthermore, for each coalition $P\subseteq \N$ and each utility vector $(u_1, \dots, u_{|P|})$ which is attainable by a feasible exchange $\Ex \subseteq E[P]$ (that is $u_i(\Ex) =u_i$ for $i\in P$), we create a column $(P,\Ex)$ by choosing an arbitrary such exchange $\Ex$. The entry in the intersection of a row $v$ and a column $(P,\Ex)$ is 1, if $v$ is covered by $\Ex$ (also implying that $v\in V^i$ for some $i\in P$ by $\Ex \subseteq E[P]$) and $0$ otherwise. The bounding vector $q$ is chosen to be $1$ for all rows. Both are nonnegative, hence they satisfy the conditions of Scarf's Lemma.

Then, we create the strict ranking required for the rows over their nonzero elements. For a vertex $v\in V^i$, this is created in a way such that $(P,\Ex)\succ_i(P',\Ex')$, whenever it holds that $u_i(\Ex)>u_i(\Ex')$. If $u_i(\Ex)=u_i(\Ex')$, then we break the ties in a way such that $(P,\Ex)\succ_i(P',\Ex')$, whenever $|P|<|P'|$.

By Scarf's Lemma, there exists an extreme point of this system that dominates every column in some row. Choose this solution $x^*$ in a way such that $|x^*|_1=\sum\limits_{P,\Ex}x^*_{P,\Ex}$ is maximal. 

Define $y^*_c=\sum\limits_{P,\Ex\mid c\in \Ex}x^*_{P,\Ex}$ for all cycles $c\in \mathcal{C}_{\Delta}$. Then, $\sum\limits_{c\mid v\in c}y^*_c=\sum\limits_{c\mid v\in c}\sum\limits_{P,\Ex\mid c\in \Ex}x^*_{P,\Ex}=\sum\limits_{P,\Ex\mid \Ex \text{ covers } v}x^*_{P,\Ex} = Q_vx^*\le 1$, so $y^*$ is indeed a fractional exchange.

Let $\Ex^*$ be an exchange using some $d\ge 0$ altruistic donors, such that $u_i(\Ex^*)\ge \lfloor \sum\limits_{c\in \mathcal{C}_{\Delta}}\gamma_c^iy^*_c\rfloor$ for all $i\in \N$. 

Suppose there is a blocking coalition $P$ with an exchange $\Ex_P$, where all $i\in P$ improve. By construction, there is an exchange $\Ex_P'$, such that $P,\Ex_P'$ has a column in $Q$ and $u_i(\Ex_P)=u_i(\Ex_P')$ for all $i\in P$. Hence, as $x^*$ was a dominated solution, it dominated  $P,\Ex_P'$ in some row $v$. Let $j$ be the organization for which $v\in V^j$. For this $j\in N$, we have that $\sum_{P',\Ex'}Q_vx^*_{P',\Ex'}=1$ and that for any $P',\Ex'$ with $j\in P'$ and $x^*_{P',\Ex'}>0$ it holds that $u_j(\Ex')\ge u_j(\Ex_P')$.

Hence, by the construction of $y^*$, we must have $\sum\limits_{c\in \mathcal{C}_{\Delta}}\gamma_c^jy_c^*=\sum\limits_{P',\Ex'}\sum\limits_{c\in \Ex'}\gamma_c^jx^*_{P',\Ex'}=\sum\limits_{P',\Ex'}u_j(\Ex')x^*_{P',\Ex'}\ge \sum\limits_{P',\Ex'\mid \Ex' \text{ covers }v} u_j(\Ex')x^*_{P',\Ex'}\ge Q_vx^*\cdot \min_{P',\Ex'} \{ u_j(\Ex')\mid Q_{v,(P',\Ex')}x^*_{P',\Ex'}>0\}\ge  u_j(\Ex_P')$. 

Therefore, we get that $u_j(\Ex^*)\ge \lfloor \sum\limits_{c\in \mathcal{C}_{\Delta}}\gamma_c^jy^*_c\rfloor \ge u_j(\Ex_P')=u_j(\Ex_P)$, contradicting that $P$ is a strong blocking coalition. 
\end{proof}

We will also use the following well-known lemma from operations research (\citet{schrijver1998theory}).
\begin{lemma}
\label{lemma:extremepoint}
Let $z^*$ be an everywhere strictly positive extreme point of $\{ Qz\le  q,  z\ge 0\}$. Then the number of variables equals the maximum number of linearly independent tight rows of $Q$.
\end{lemma}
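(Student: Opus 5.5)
The plan is to use the standard characterization of extreme points of a polyhedron via the rank of its active constraints, and then to observe that strict positivity removes every nonnegativity constraint from the active set. Write the polyhedron as $\{z\in\mathbb{R}^m \mid Qz\le q,\ -z\le 0\}$, where $m$ is the number of variables. For a point $z^*$ of this polyhedron, let $I=\{i \mid Q_i z^* = q_i\}$ be the tight rows of $Q$ and let $J=\{j \mid z^*_j=0\}$ be the tight nonnegativity constraints. Since $z^*_j>0$ for every $j$, we have $J=\emptyset$. So the only active constraints at $z^*$ are the rows $Q_i$ with $i\in I$.

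The inequality $\operatorname{rank}\{Q_i\}_{i\in I}\le m$ is trivial, because these are vectors in $\mathbb{R}^m$. For the reverse inequality, I argue by contradiction. Suppose the tight rows span a proper subspace of $\mathbb{R}^m$. Then there is a nonzero $w\in\mathbb{R}^m$ with $Q_i w=0$ for all $i\in I$. Consider the two points $z^*\pm\varepsilon w$:
\begin{itemize}
\item for $i\in I$, $Q_i(z^*\pm\varepsilon w)=q_i$ exactly;
\item for $i\notin I$, $Q_i z^*<q_i$ strictly, so finiteness of the number of rows gives an $\varepsilon>0$ small enough that $Q_i(z^*\pm\varepsilon w)\le q_i$;
\item since $z^*>0$ coordinatewise, shrinking $\varepsilon$ further also keeps $z^*\pm\varepsilon w\ge 0$.
\end{itemize}
Hence both points lie in the polyhedron, and $z^*=\tfrac12\big((z^*+\varepsilon w)+(z^*-\varepsilon w)\big)$ with $w\neq 0$. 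This contradicts $z^*$ being an extreme point. Therefore the tight rows of $Q$ contain $m$ linearly independent rows, and the maximum number of linearly independent tight rows is exactly $m$, the number of variables.

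There is no real obstacle here. The only point requiring care is choosing $\varepsilon$ uniformly over the finitely many slack rows and coordinates, which is immediate because there are only finitely many strict inequalities. If desired, one can instead cite the textbook fact (\citet{schrijver1998theory}) that a point is a vertex if and only if its active constraints have rank $m$, and then note that positivity makes $J=\emptyset$.
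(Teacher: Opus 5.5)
Your proof is correct. You perturb $z^*\pm\varepsilon w$ along a nonzero $w$ orthogonal to all tight rows, and strict positivity ensures no nonnegativity constraint is active; this is exactly the standard argument. The paper gives no proof of its own and simply cites this rank characterization of vertices from \citet{schrijver1998theory}, so your self-contained argument is precisely the fact the paper relies on.
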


\section{Lower Bounds}\label{sec:low}
In this section, we establish lower bounds on the number of additional donors required for pairwise exchange, three-way exchange, unbounded exchange, and $\Delta$-exchange for any fixed $\Delta$.

In the case of pairwise exchange, we can discard all compatibility arcs $(u,v)$ such that $(v,u)$ is not a compatibility arc, and only consider the undirected graph $\graph$ of the mutual compatibilities. Hence, in this case, we replace the set $E$ of the edges of $\graph$ by the possible pairwise exchanges $\mathcal{C}_2$. Therefore, the feasible exchanges are now assumed to be the matchings of the graph $\graph$.

For pairwise exchanges, we obtain the following lower bound.

\begin{theorem}\label{thm:pairwise-empty-core}
    For arbitrarily large values of $|V|$, there exists a partition exchange economy with pairwise exchanges such that the $(\frac{|V|}{18}-3)$-supplemented core is empty, even if the number of players is $n=3$.
\end{theorem}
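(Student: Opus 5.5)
The plan is to amplify the empty-core gadget of Figure~\ref{fig:pairwise-nocore} (from \citet{csaji2024ntu}) by taking many disjoint copies of it. That gadget has $18$ vertices spread over three organizations, so $k$ copies give $|V|=18k$. The target is to show that $k-1$ extra altruists, roughly, cannot stabilize the union. The bound $\frac{|V|}{18}-3=k-3$ suggests the following accounting: each copy contributes one unit of ``instability deficit'', and each altruistic donor repairs at most one copy, up to a small additive loss.

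\textbf{Step 1: record the gadget's properties.} In one copy $H$, the maximum number of covered vertices is computed separately for each single organization, each pair, and the grand coalition. As the example notes, a single organization leaves $3$ of its vertices uncovered, any two together leave $2$ uncovered, and all three together leave $5$ uncovered. The combinatorial reason the core is empty is that any matching of $H$ leaves at least $5$ vertices unmatched, distributed so that some pair $\{i,j\}$ has at most $1$ of its own vertices ``better off'' than needed. Concretely, the pair deficits form an overdemanded system: for each pair $P$, the coalition can guarantee a vector of utilities, and the three pair constraints together require covering more vertices than any matching of $H$ allows.

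\textbf{Step 2: build the large instance.} Take $k$ vertex-disjoint copies $H_1,\dots,H_k$, keeping the same three organizations in every copy, with no edges between copies. A coalition's options then decompose additively over the copies. An altruist in $V^0$ is compatible with everyone, but it can cover only one vertex, and hence affects only one copy. For any exchange $\Ex^*$ in $\GG$ with $|V^0|=d$, at least $k-d$ copies contain no vertex matched to an altruist, so $\Ex^*$ restricted to each of those copies is a plain matching of $H_j$.

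\textbf{Step 3: derive the contradiction.} Since blocking uses total utility, I cannot simply argue copy by copy. Instead I would sum the inequalities. Suppose $\Ex^*$ is not blocked by any single organization or any pair. Then for each $i$, $u_i(\Ex^*)\ge k\cdot(|U^i\cap H|-3)$. For each pair $\{i,j\}$, the non-blocking condition says that some member is at most its best value on the pair's Pareto frontier. This is the delicate point, because the pair frontier in the union is the Minkowski sum of the per-copy frontiers.

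The plan for this step is to use that the gadget's frontier for each pair is essentially $u_i+u_j\le(\text{pair size})-2$ per copy, with enough flexibility to hit any split. Then the pair $\{i,j\}$ can achieve any split of total $k(\text{size}-2)$. Non-blocking therefore forces $u_i+u_j\ge k(\text{size}_{ij}-2)-1$, up to an $O(1)$ slack per pair. Summing over the three pairs gives that $\Ex^*$ leaves at most about $3k+O(1)$ vertices of the original organizations uncovered. However, each unaltruisted copy leaves at least $5$ uncovered, and each altruist saves at most one vertex. This yields $5k-d\le 3k+O(1)$, which gives only $d\ge 2k-O(1)$, more than the claim. So I expect the per-pair flexibility to be weaker than the idealized version, and the careful constants will produce exactly $k-3$.

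\textbf{Main obstacle.} The hardest step is verifying the per-copy pair frontiers precisely in Figure~\ref{fig:pairwise-nocore}: that each pair can realize every split of its optimal total, possibly after modifying the gadget (for instance, adding pendant structure). This is what lets blocking deviations aggregate across copies. If that fails, I would redesign the gadget so that each pair's within-copy frontier is integral-convex, then redo the counting to get exactly $k-3$.
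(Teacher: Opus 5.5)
Your mechanism is the right one, and it is essentially the paper's. The paper's instance is also a disjoint union of odd cliques with exactly the ownership patterns of Figure~\ref{fig:pairwise-nocore}: $3a$ triangles with one vertex per organization, and $9a$ copies of $K_5$ split $(2,2,1)$ in rotation. Each altruist repairs at most one clique, and the proof ends with an averaging argument over the two worst-off organizations. As written, however, your proposal does not reach the statement, for four reasons.
\begin{itemize}
\item The gadget of Figure~\ref{fig:pairwise-nocore} has $21$ organizational vertices (two triangles and three $K_5$'s), not $18$. With $|V|=21k$, the bound $k-3$ that you expect the ``careful constants'' to deliver is only $\frac{|V|}{21}-3<\frac{|V|}{18}-3$. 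That outcome would not prove the theorem.
\item The step you call the main obstacle is immediate, and your ``idealized'' count is exact, not optimistic. In each copy, the pair $\{i,j\}$ owns two adjacent vertices in each triangle and four vertices in one $K_5$. In each of the other two $K_5$'s it owns three vertices (two of one member, one of the other), and either member's vertex can be the one left out. So over $k$ copies the pair can leave uncovered any split $(s,2k-s)$ with $0\le s\le 2k$.
\item Your claim that non-blocking forces $u_i+u_j\ge k(\mathrm{size}_{ij}-2)-1$ is false as stated. A pair cannot block when one member has no uncovered vertex, so that case must be excluded separately.
\item The deficit must be counted per odd clique, giving $5k-d$. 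Counting per copy gives $5(k-d)$, which is far too weak.
\end{itemize}

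To close the argument, let $c_1\ge c_2\ge c_3$ be the numbers of uncovered vertices of the three organizations in an exchange $\Ex^*$ using $d\le 2k-2$ altruists.
\begin{itemize}
\item Each of the $5k$ odd cliques not touched by an altruist leaves a vertex uncovered. Hence $c_1+c_2+c_3\ge 5k-d\ge 3k+2$, and so $c_1+c_2\ge\lceil\tfrac23(3k+2)\rceil=2k+2$.
\item Non-blocking by organization $1$ alone gives $c_1\le 3k$. Therefore $c_2+c_3\ge 2$, and thus $c_2\ge 1$.
\item Let organization $1$ leave $s=\min(c_1-1,2k)$ vertices uncovered and organization $2$ leave $2k-s$. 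Both strictly improve, so $\{1,2\}$ blocks.
\end{itemize}
Hence the $(2k-2)$-supplemented core is empty. Since $2k-2\ge\frac{21k}{18}-3$, this proves the theorem, in fact with the better constant $\frac{2|V|}{21}-2$.

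The paper uses one triangle per three $K_5$'s. That ratio makes the single-organization and pair deficits coincide ($6a$ each) but gives only about $\frac{|V|}{18}$. The ratio in Figure~\ref{fig:pairwise-nocore} is more efficient.
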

\begin{proof}

 We create an instance of a partition exchange economy with pairwise exchanges as follows. Let $K_l$ be the clique on $l$ vertices - i.e. the graph where any two vertices are connected. The graph $\graph$ of the partition exchange economy consists of $3a$ vertex-disjoint $K_3$ graphs (i.e. triangles) $K_3^1,K_3^2, \dots, K_3^{3a}$ and $9a$ vertex-disjoint $K_5$ graphs $K_5^1,K_5^2,\dots, K_5^{9a}$, where $a\in \mathbb{N}$ is a parameter. We have that $|V|=9a+45a=54a$, and the size of each organization is $18a$.

 There are three organizations $1,2,3$. Each triangle $K_3^j$ contains one vertex from each organization. Furthermore, $K_5^{(j-1)\cdot 3a +1},\dots, K_5^{(j-1)\cdot 3a+ 3a}$ contains 2 vertices from organizations $j$ and $j+1$ and 1 vertex from $j+2$ for $j\in [3]$, where addition is taken modulo $3$ (i.e. $2+1=3$, but $2+2=1$).

It is easy to see that for any pairwise exchange (matching), at least $12a$ vertices must remain unmatched, at least one for each disjoint clique in the graph. 

Furthermore, if two organizations join, then we claim that together they can create a matching where at most $6a$ of their vertices remain unmatched. Indeed, they can match all their vertices in the triangles as there is always an edge between them. Among $K_5^1,\dots, K_5^{9a}$, there are $6a$ cliques, where one of them has an odd number of vertices. In such cliques, they must leave one of their vertices unmatched, but this can be an arbitrary one of their vertices, as any two vertex within a $K_5^j$ are connected. In the cliques $K_5^j$, where both of them have an even number of vertices, all can be matched. 

Finally, a single organization can also create a matching, leaving only $6a$ of its vertices unmatched. This holds because among the $9a$ cliques $K_5^1,\dots, K_5^{9a}$, it has an even number of vertices in $6a$ of them, so all of them can be matched there. Hence, only $3a$ vertices in the $K_5^j$s and $3a$ in the triangles will be unmatched.

Let $d:= a $. Suppose for the contrary that there exists a $3(d-1)$-supplemented core solution $\Ex$. By our above observations, the two worst-off organizations  together have at least $\frac{2}{3}\cdot (12a-(3d-3))=8a - 2d +2=6a+2$ unmatched vertices. Say one of them has $c\ge 0$ and the other at least $8a-2d+2-c$. As each organization can create a matching with at most $6a$ unmatched vertices, we must have that $c \le 6a$ and $8a-2d+2 - c\le 6a$, so $2a-2d +2\le c$. Hence, $2\le c\le 6a=8a-2d+2-2$ and so both organizations must have at least $2$ vertices unmatched. Hence, both of these two organizations can improve their situation in a matching among themselves, where only $6a<6a+2$ vertices are unmatched, contradicting the fact that $M$ is in the $3(d-1)$-supplemented core.


As we have that $d=a$ and hence $d=\frac{|V|}{54}$, the statement follows.
\end{proof}


\begin{example}\label{ex:no3core}
    The following is a simple example to show that the core can be empty for cycles of length 3. (For pairwise exchanges, a no-instance was given by \citet{csaji2024ntu}). We have 5 organizations $1,2,3,4,5$ with $V^i=U^i=\{ v_i\}$ for $i\in \{1,..,5\}$ along a cycle of pairwise exchanges of length 5, in this order. Then, we subdivide the arcs $(v_i,v_{i+1})$ ($i+1$ taken modulo $5$) with a new vertex $v_{i,i+1}$ that belongs to the organization $i+1$. It is easy to see that any feasible exchange after this subdivision corresponds to one or two pairwise exchanges before it. So, there will always be an organization $i$ such that $v_i$ is unmatched. Then, $i$ and $i+1$ block, as there is a 3-cycle $(v_i,v_{i,i+1},v_{i+1})$, where $i+1$ has 2 vertices covered and $i$ has 1, instead of 1 and 0 respectively.  
\end{example}

Using disjoint copies of Example~\ref{ex:no3core}, we get the following result.

\begin{theorem}\label{thm:triangle-empty-core}
    In a partition exchange economy with $\Delta = 3$, the $(\frac{|V|}{10}-1)$-supplemented core can be empty for arbitrarily large instances.
\end{theorem}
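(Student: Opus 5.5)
The plan is to take many vertex-disjoint copies of the instance in Example~\ref{ex:no3core}, each with its own five organizations. Then show that with fewer altruists than copies, at least one copy is left entirely untouched by the designer's donors, and that copy still carries its own blocking coalition.

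\textbf{Construction.} Fix $k\in\mathbb{N}$ and let the economy consist of $k$ disjoint copies $H_1,\dots,H_k$ of the graph of Example~\ref{ex:no3core}, with $\Delta=3$. Copy $H_j$ has its own five organizations $1^{(j)},\dots,5^{(j)}$, so there are $5k$ organizations in total. There are no edges between different copies. Each copy has $10$ vertices (the five $v_i$ and the five subdivision vertices $v_{i,i+1}$), so $|V|=10k$. I will show that the $(k-1)$-supplemented core is empty; since $k-1=\frac{|V|}{10}-1$ and $k$ is arbitrary, this gives the statement.

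\textbf{Key step: each altruist reaches at most one copy.} Let $V^0$ be any set of at most $k-1$ altruists and let $\Ex^*$ be any exchange in $\GG$. An altruist $a\in V^0$ lies on at most one cycle of $\Ex^*$. That cycle has length at most $3$, so its form is $a\to x\to a$ or $a\to x\to y\to a$ with $x,y\in V$. In the second case the arc $(x,y)$ lies in $\G$, so $x$ and $y$ are in the same copy. Hence each altruist's cycle meets at most one copy. The cycles of $\Ex^*$ that contain no altruist lie inside a single copy, because there are no inter-copy edges. Since $|V^0|\le k-1<k$, some copy $H_j$ meets no cycle that uses an altruist. The restriction of $\Ex^*$ to $H_j$ is therefore an ordinary exchange within $H_j$ using length-$\le 3$ cycles, and the utilities of the organizations of $H_j$ are exactly those given by this restricted exchange.

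\textbf{Applying the example.} By the argument of Example~\ref{ex:no3core}, every exchange in that graph leaves some $v_i^{(j)}$ uncovered. The coalition $\{i^{(j)},(i+1)^{(j)}\}$ can then use the $3$-cycle $(v_i,v_{i,i+1},v_{i+1})$ inside $H_j$. Organization $i^{(j)}$ rises from $0$ to $1$ covered vertex. For $(i+1)^{(j)}$, the only vertices it owns are $v_{i+1}$ and $v_{i,i+1}$; the vertex $v_{i,i+1}$ can only be covered by this very $3$-cycle. If that cycle were used, $v_i$ would be covered, so $v_{i,i+1}$ is uncovered and $(i+1)^{(j)}$ has at most $1$ covered vertex, rising to $2$. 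This coalition lies entirely in $H_j$ and excludes the designer, so it is a valid blocking coalition, and $\Ex^*$ is not in the $V^0$-supplemented core.

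\textbf{Main obstacle.} The step needing the most care is the claim that an altruist cannot serve two copies at once or otherwise affect a copy it does not touch. This rests on the cycle-length bound $\Delta=3$ together with the absence of inter-copy arcs, as argued above. I would also double-check the example's assertion that every exchange leaves some $v_i$ unmatched: each $3$-cycle covers two adjacent $v_i$'s of the $5$-cycle, the cycles are disjoint, and $5$ is odd.
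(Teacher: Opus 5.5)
Your proposal is correct and follows the paper's approach: the paper proves this theorem in one line by taking disjoint copies of Example~\ref{ex:no3core}, and your pigeonhole argument supplies the details it leaves implicit. With $\Delta=3$ and no inter-copy arcs, each altruist-containing cycle meets at most one copy, so some copy is untouched and keeps its blocking pair. Your case split omits cycles that pass through two altruists, but these meet even fewer copies, so nothing changes.
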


More generally, we show that for any bound $\Delta$, $\Omega (|V|)$ donors may be required for the core, even with three organizations.

\begin{theorem}\label{thm:general-empty-core}
    For any fix exchange bound $\Delta$ and arbitrarily large values of $|V|$, there exists a partition exchange economy such that the $O(|V|)$-supplemented core is empty, even if the number of players is $n=3$.
\end{theorem}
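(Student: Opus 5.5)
The plan is to reduce to the pairwise construction of Theorem~\ref{thm:pairwise-empty-core} with a gadget that turns every $2$-cycle of an undirected graph into a single directed cycle of length at most $\Delta$, while destroying every longer cycle. Let $H$ be the undirected instance from the proof of Theorem~\ref{thm:pairwise-empty-core}: $3a$ triangles and $9a$ copies of $K_5$, spread over three organizations.

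\textbf{The gadget.} Fix an integer $k\ge 1$ with $2k\le \Delta<3k$. Such a $k$ exists for $\Delta=2$ and for every $\Delta\ge 4$; for instance, $k=\lfloor \Delta/2\rfloor$ works. Replace each vertex $v$ of $H$ by a directed path $P_v=(v^1,\dots,v^k)$, all owned by the organization of $v$. For every edge $uv$ of $H$, add the arcs $(u^k,v^1)$ and $(v^k,u^1)$.

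Every directed cycle then traverses whole paths, and it corresponds to a closed walk of length $\ell\ge 2$ in $H$ that never uses the same edge twice in a row. Its length is $\ell k$. Hence the cycles of length at most $\Delta$ are exactly those with $\ell=2$. An exchange in the new graph is therefore precisely a matching $M$ of $H$, and the utility of organization $i$ is $k$ times the number of $M$-covered vertices of $i$. All statements about which coalitions can cover how many vertices (``any single organization or pair can leave at most $6a$ of its own vertices uncovered'', ``at least $12a$ vertices stay uncovered overall'') carry over after multiplying by $k$. The number of vertices becomes $|V|=54ak$, which is still $\Theta(a)$ for fixed $\Delta$.

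\textbf{Effect of altruists.} Next I bound how much one altruistic donor from $V^0$ can help. An altruist $w$ acts as a pair whose dummy patient accepts any donor. A cycle through $w$ has the form $w\to P_{v_1}\to\dots\to P_{v_r}\to w$, where $v_1\cdots v_r$ is a path in $H$, and its length is $rk+1\le\Delta$. Thus each altruist covers at most $c_\Delta:=\lfloor(\Delta-1)/k\rfloor\le 2$ vertices of $H$, each through its full path.

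So with $d$ altruists, the resulting exchange is a matching of $H$ together with at most $c_\Delta d$ extra covered $H$-vertices. The total number of uncovered $H$-vertices is at least $12a-c_\Delta d$. Now I rerun the averaging argument of Theorem~\ref{thm:pairwise-empty-core} with $3(d-1)$ replaced by $c_\Delta d$. The two worst-off organizations together have at least $\tfrac23(12a-c_\Delta d)$ uncovered $H$-vertices. For $d\le \varepsilon a$ with a small constant $\varepsilon$, this exceeds $6a+1$, and the same case analysis shows that both organizations have at least two uncovered vertices. The pair then blocks with its own exchange leaving at most $6a$ uncovered. This gives emptiness of the $\Omega(|V|)$-supplemented core.

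\textbf{The case $\Delta=3$ and the main obstacle.} The remaining case is $\Delta=3$, where no such $k$ exists. Theorem~\ref{thm:triangle-empty-core} already gives linearly many required donors there, but with five organizations. To keep $n=3$, I would build a dedicated $\Delta=3$ gadget by subdividing the edges of $H$ as in Example~\ref{ex:no3core}. Each $2$-cycle $uv$ becomes a $3$-cycle through a subdivision vertex whose owner is chosen to be $u$'s or $v$'s organization, and the ownerships are arranged cyclically so that the parity counts of the proof of Theorem~\ref{thm:pairwise-empty-core} are preserved. I expect this $\Delta=3$ case, and more generally checking that altruists together with the gadget create no unintended short cycles that break the counting, to be the main obstacle. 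The $k$-path construction handles all $\Delta\ne 3$ cleanly.
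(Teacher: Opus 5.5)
Your proposal has a genuine gap: the case $\Delta=3$ is left unproved. The theorem is claimed for \emph{every} fixed $\Delta$, but your $k$-path blow-up only covers $\Delta\neq 3$. This is not a routine leftover. No $k$ satisfies $2k\le 3<3k$, and Theorem~\ref{thm:triangle-empty-core} does not help, because it uses five organizations. Your sketched subdivision of $H$ also runs into concrete problems you have not resolved:
\begin{itemize}
\item If you subdivide one arc of each bidirected edge of a $K_5$, the remaining reverse arcs form a tournament. That tournament contains directed triangles, which are unintended feasible $3$-cycles, unless you orient every clique acyclically.
\item More importantly, each matched edge now also covers a subdivision vertex owned by one endpoint's organization. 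So a matched edge gives $2$ units to one side and $1$ to the other, and subdivision vertices of unmatched edges count as uncovered vertices of their owners.
\item As a result, the bookkeeping of Theorem~\ref{thm:pairwise-empty-core} no longer holds in that form: at least $12a$ forced deficits, deficit $6a$ for singles and pairs, and free choice of which vertex of an odd clique stays unmatched. The blocking argument would have to be redone with these asymmetric splits.
\end{itemize}
The paper avoids all of this with a construction that is uniform in $\Delta$. It uses subdivided triangles whose sides are $\Delta$-cycles, with ownership of the subdivision vertices alternating between even and odd copies, so that any two players split these gains evenly. It also uses subdivided $5$-cycles whose sides are $(2\delta+2)$-cycles, $\delta=\lfloor(\Delta-2)/2\rfloor$. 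For $\Delta=3$ these are plain bidirected $5$-cycles, which are triangle-free, so only $2$-cycles are feasible there.

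There is also an error in your altruist step for $\Delta\neq 3$. It is false that a cycle through an altruist $w$ traverses whole paths $P_v$. The designer's altruist may be compatible with an interior vertex $v^j$, and its dummy patient accepts every donor, so $w\to v^j\to\dots\to v^k\to w$ is feasible and covers only part of $P_v$. Utilities are then not multiples of $k$, and ``a matching of $H$ plus at most $c_\Delta d$ extra $H$-vertices'' is wrong. This is repairable:
\begin{itemize}
\item Each cycle through altruists covers at most $\Delta-1<3k$ real vertices, so at least $12ak-(\Delta-1)d$ vertices stay uncovered.
\item For $d\le a-1$, the two worst-off organizations have uncovered counts $C,C'\le 6ak$ (otherwise one of them blocks alone), with $C+C'\ge(6a+2)k$.
\item The pair can place its $6a$ unmatched $H$-vertices on either member at will, which moves uncovered counts in steps of $k$. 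Hence some integer $s$ satisfies $ks<C$ and $k(6a-s)<C'$, which gives the block.
\end{itemize}
With this fix, your reduction gives a cleaner proof than the paper's for all $\Delta\neq 3$, but the case $\Delta=3$ with three players remains unproved.
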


Theorems~\ref{thm:pairwise-empty-core}-\ref{thm:general-empty-core} show that even the trivial solution that assigns a distinct additional altruistic donor for each patient $v\in \cup_{i\in \N} U^i$ is optimal up to a constant factor both for pairwise and cyclic exchanges. 

\section{Upper Bounds}
\subsection{Pairwise Exchange}\label{sec:pairwise}


We first consider pairwise exchange. Our main result on the existence of the supplemented core relies on the following concept of independent odd cycles.
\begin{definition}
Given an undirected graph $\graph =(V,E)$, a set of odd-length cycles $\mathcal{C}=\{ c_1,\dots, c_l\}$ is called \emph{independent}, if $c_i\cap c_j=\emptyset$ for $i\ne j$ and there exists no $u\in c_i, v\in c_j$, $j\ne i$ such that $(u,v) \in E$. Let $\oddcycles(\graph)$ denote the maximum number of independent odd cycles of $\graph$.
\end{definition}

\begin{theorem}\label{thm:arbitrary}
    In a partition exchange economy with pairwise exchanges, the 
    $ \oddcycles(\graph)$-supplemented core
    is always nonempty and  contains a Pareto-optimal solution.   As a special case, if $\graph$ is bipartite, then the core is nonempty and contains a Pareto-optimal solution.
\end{theorem}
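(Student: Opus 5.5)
We start from the Rounding Lemma (Lemma~\ref{thm:ApplyScarf}) with $\Delta=2$. It supplies a fractional matching $y^*$ of $\graph$ such that any matching $\Ex^*$ of $\GG$ lies in the $V^0$-supplemented core whenever
\[
u_i(\Ex^*)\ge \big\lfloor \textstyle\sum_{c}\gamma^i_c y^*_c\big\rfloor \quad\text{for all } i\in\N .
\]
The task is therefore a rounding problem. We must round $y^*$ to an integral matching, helped by at most $\oddcycles(\graph)$ altruists, without reducing any organization's utility below the floor of its fractional utility.

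\textbf{Step 1: reduce to an extreme point with few odd cycles.} Let $\bar u_i=\sum_c\gamma^i_c y^*_c$. Consider the polytope of fractional matchings $z\ge 0$ with $\sum_{e\ni v}z_e\le 1$ for all $v$ and $\sum_e \gamma^i_e z_e\ge \bar u_i$ for all $i$. It is nonempty because it contains $y^*$. Among its points, take $z^*$ maximizing the total value $\sum_e z_e$ (or a generic positive objective), and choose $z^*$ to be an extreme point. Drop the zero edges and apply Lemma~\ref{lemma:extremepoint}: the number of positive variables equals the number of linearly independent tight rows, counting both vertex rows and organization rows.

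The fractional-matching polytope alone has half-integral extreme points whose fractional part is a union of vertex-disjoint odd cycles. Here the extra organization constraints perturb this structure. Handling them is the main obstacle. My plan is to avoid the perturbation by instead rounding $y^*$ combinatorially, as follows. Decompose the support of $y^*$ and repeatedly shift weight around even cycles and along even or odd alternating paths. A shift in one direction cannot decrease every organization's utility. More importantly, along an even cycle every vertex stays covered to the same degree, so no utility changes at all. Using such shifts, we push $y^*$ to a fractional matching $z$ with the following properties:
\begin{enumerate}
\item[(a)] every vertex has the same coverage $\sum_{e\ni v}z_e$ as under $y^*$, or larger;
\item[(b)] the support of the fractional edges is a forest plus vertex-disjoint odd cycles.
\end{enumerate}
Alternating paths whose endpoints are deficient can be augmented only in the direction that weakly increases coverage, and this keeps (a).

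\textbf{Step 2: remove the forest part, then the odd cycles.} On the forest part, round leaves inward. Because coverage at every vertex is what counts for utilities, a tree component can be rounded to a matching that covers every vertex which was fully covered, loses at most one vertex whose coverage was below one, and gains other vertices. A cleaner route is the following. First make $z$ maximal in the sense that no augmenting improvement exists. Then standard Gallai–Edmonds-type reasoning gives a maximum-size fractional matching that is half-integral, with fractional part exactly a family of odd cycles $C_1,\dots,C_l$ on which $z=1/2$. In such a solution every vertex of each odd cycle is covered exactly once. Moreover, by maximality, no edge can join two distinct half-cycles, since otherwise we could augment and obtain a larger fractional matching. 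Hence $C_1,\dots,C_l$ are independent and $l\le\oddcycles(\graph)$.

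For each $C_j$ we add one altruist $a_j\in V^0$, adjacent to a chosen vertex of $C_j$. Matching $a_j$ to that vertex and perfectly matching the remaining even path covers every vertex of $C_j$ integrally. The result is a matching $\Ex^*$ of $\GG$ with $u_i(\Ex^*)\ge\sum_e\gamma^i_e z_e\ge\bar u_i\ge\lfloor\bar u_i\rfloor$. The Rounding Lemma then puts $\Ex^*$ in the supplemented core.

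\textbf{Step 3: Pareto optimality and the bipartite case.} Choose, among all matchings of $\GG$ satisfying the floor bounds, one that is Pareto-maximal with respect to $(u_i)_i$. Pareto improvements preserve the floor bounds, so the chosen matching remains in the $V^0$-supplemented core and is Pareto optimal. If $\graph$ is bipartite, then $\oddcycles(\graph)=0$, so no altruists are needed and the core is nonempty with a Pareto-optimal element.

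The delicate point, and the one I expect to be hardest, is Step 1 together with the first half of Step 2. We must show that the improvement process never lowers any organization below the floor of its fractional utility, while still terminating in a half-integral solution whose odd cycles are pairwise non-adjacent. If direct augmentation fails, the fallback is to argue through the extreme-point count of Lemma~\ref{lemma:extremepoint}, showing that the organization constraints can add fractional structure only in a way that is absorbed by taking floors.
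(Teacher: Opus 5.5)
\textbf{There is a genuine gap, exactly where you flag the difficulty: the organization constraints are never actually controlled.}

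Your ``cleaner route'' in Step~2 replaces the current fractional matching by a half-integral maximum fractional matching obtained from Gallai--Edmonds reasoning. Nothing guarantees that this new matching still satisfies property (a), or even the aggregate bounds $\sum_e\gamma^i_e z_e\ge \bar u_i$. Gallai--Edmonds gives you \emph{some} half-integral optimum, not one that dominates $y^*$.

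No argument of this shape can work. It uses nothing about $y^*$ beyond its being a fractional matching, and the conclusion $u_i(\Ex^*)\ge \bar u_i$ (without the floor) is false for general fractional matchings. Take the star $K_{1,3}$ with its three leaves owned by three different organizations and $y=1/3$ on every edge. The graph is bipartite, so $\oddcycles(\graph)=0$, and each of these organizations has $\bar u_i=1/3$. Yet every matching covers at most one leaf. This $y$ is already a maximum fractional matching satisfying (a) and (b), so your Step~2 goes straight to Gallai--Edmonds and loses two organizations. The same example shows that (a) is incompatible with the half-integral solutions you describe: those have vertex coverage in $\{0,1\}$, so all three leaves would need coverage $1$. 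The floor is therefore not a cosmetic last step; it must be built into the rounding. Putting the unfloored $\bar u_i$ into your Step~1 polytope already discards the integrality you need.

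Two smaller issues:
\begin{itemize}
\item Shifting along an alternating path of even length raises coverage at one end and lowers it at the other, so your path shifts do not preserve (a) either.
\item Maximality of size does not exclude an edge between two half-cycles. Replacing $C_1\cup C_2$ by the integral matching through that edge has the same size $k_1+k_2+1$. The right (easy) observation, which the paper uses, is that this integral matching covers exactly the same vertices.
\end{itemize}

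\textbf{What the paper does is essentially your ``fallback'', carried out.}
\begin{itemize}
\item It puts the floors into the polytope $\{z\ge 0:\ \sum_{e\ni v}z_e\le 1,\ \sum_e\gamma^i_e z_e\ge \lfloor\bar u_i\rfloor\}$.
\item It takes an extreme point maximizing $\sum_e(\sum_i\gamma^i_e)z_e$ and repeatedly fixes integral coordinates.
\item When all remaining coordinates are fractional, it runs a token count with Lemma~\ref{lemma:extremepoint}. Each edge gives $1/2$ token per endpoint: to that vertex's row if the row is tight (or the endpoint is not a patient vertex), and otherwise to the owning organization's row.
\item Integrality of $b_i=\lfloor\bar u_i\rfloor$ is what forces a tight organization row that is not a sum of tight vertex rows to have at least two non-tight fractional vertices, and hence to collect a full token.
\item The count then gives every vertex at most two fractional edges. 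Odd paths are ruled out by the objective and even paths or cycles by extremality, leaving vertex-disjoint odd cycles.
\item These cycles are merged until independent and closed with one altruist each.
\end{itemize}
You need this argument, or an equivalent one that genuinely exploits the integer right-hand sides, to complete the proof. Your Pareto-optimality step and the bipartite specialization are fine once that is in place.
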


\begin{proof}

By Lemma~\ref{thm:ApplyScarf}, there exists a fractional exchange $y^*$, such that if $u_i(\Ex^*)\ge \lfloor \sum_{c\in \mathcal{C}_{2}}\gamma_c^iy^*_c\rfloor = \lfloor\sum_{e\in E}\gamma_e^iy^*_e\rfloor$ for all $i\in N$ for some exchange $\Ex^*$ using $d$ additional donors, then $\Ex^*$ is in the $d$-supplemented core. Let $y^*(E(U^i)):=\sum_{e\in E}\gamma_e^iy^*_e$.

 We create a polyhedron $\mathcal{P}= \{ Az\le a, Bz\ge b, z\ge 0\}$ as follows. The variables $z$ correspond to the edges of $\graph$. Furthermore, the matrix $A$ is just the incidence matrix of $\graph$ with $a\equiv 1$. In the matrix $B$, we have a row for each organization, such that $B_{i,e}=\gamma_e^i=|e\cap U^i|$. It is easy to see that the row $B_i$ of a organization $i$ is the sum of the rows $A_v$ for $v\in U^i$. Every entry of $B$ is from $\{ 0,1,2\}$. The vector $b$ is set such that $b_i = \lfloor y^*(E(U^i))\rfloor$.  




The polyhedron $\mathcal{P}$ is clearly nonempty, as $y^*\in \mathcal{P}$.

Consider the following procedure. Let $z^*$ be an extreme point of $\mathcal{P}$ that maximizes $\sum\limits_{e\in E}(\sum\limits_{i\in \N}\gamma_e^i )z_e^*$. If there is an integer coordinate $z_e^*$ in $z^*$, then we delete that column from $A$ and $B$ and set $a:= a-z^*_e\cdot A_{\cdot,e}$ and $b:= b-z_e^*\cdot B_{\cdot,e}$, where $Q_{\cdot,j}$ denotes columns $j$ for a matrix $Q$. Also, if a row becomes all zero, then we delete it. Furthermore, if $a_v$ becomes 0, then we can set all variables with a nonzero entry in the row $A_v$ to $0$ and delete them. Hence, we can assume that $a$ is always $1$ for every coordinate. If some row or column got deleted, then we update $z^*$ to an extreme point of the new polyhedron $\mathcal{P}$ that maximizes $\sum\limits_{e\in E}(\sum\limits_{i\in \N}\gamma_e^i )z_e^*$.

If at some point, all variables become integral, then it gives an exchange $\Ex$ in the core by Lemma~\ref{thm:ApplyScarf}. 

Otherwise, suppose that all variables of the extreme point $z^*$ are fractional.
Then, we can use Lemma~\ref{lemma:extremepoint} for the (updated) polyhedron $\mathcal{P}$ to show some properties that are satisfied by $z^*$.

Credit each variable $z_e^*$ of $z^*$ with 1 token. Then, we redistribute the tokens as follows. If $u\in V^i\setminus U^i$ for some $i\in \N$, or $A_uz^*=1$, then we give $\frac{1}{2}$ token to $A_u$. Else, $u\in U^i$, for some $i$, $A_uz^*<1$ and we give $\frac{1}{2}$ token to $B_i$. Similarly, if $v\in V^j\setminus U^j$ for some $j\in \N$, or $A_vz^*=1$, then we give $\frac{1}{2}$ token to $A_v$. Otherwise, $v\in U^j$ for some $j$, $A_vz^*<1$ and we give $\frac{1}{2}$ token to $B_j$. It is easy to see that each $z_e^*$ component distributes at most 1 token.


Furthermore, each tight row $A_v$ gets at least one token, since each coordinate is $\{ 0,1\}$ in $A_v$ and all coordinates are fractional, so there are at least two edges $e_1,e_2$ that give $\frac{1}{2}$ token to $A_v$. 

Take a tight row $B_i$, such that not all $A_v$ rows with $v\in U^i$ are tight. Otherwise, the row $B_i$ would be generated by the tight rows $A_v$ for $v\in U^i$. 

Then, take a vertex $v\in U^i$ with $A_vz^*<1$. By the tightness of $B_i$, we have that there are at least two $v\in U^i$ with $A_vz^*<1$, since $b_i=B_iz^* = \sum_{v\in U^i}A_vz^*$ and $b_i$ is integer. 
Therefore, $B_i$ gets at least one token too.

By Lemma~\ref{lemma:extremepoint} we get that this is only possible, if each tight row that got at least one token got exactly one token, and if a row got less than one token, then it got 0.

From this, we conclude that there are at most two fractional edges of $z^*$ incident to any vertex $v$ with $A_vz^*=1$. We claim that this is also true for a vertex $v$ with $A_vz^*<1$. Indeed, if $v\in U^i$ for a organization $i$ such that $B_i$ is tight, then all fractional edges incident to $v$ give a $\frac{1}{2}$ token to $B_i$'s row, so there are at most two. Finally, if $v\notin U^i$ for any $i\in \N$, then $\frac{1}{2}$ of $z^*_e$'s token would have been given to $A_v$ for any fractional edge $e$ with $v\in e$, hence as only tight rows received tokens, there can be no fractional edges incident to $v$. Hence, the fractional edges of $z^*$ correspond to a disjoint set of cycles and paths in the graph $\graph[\cup_{i\in N}U^i]$. Also, no paths can be odd-length, as otherwise it contradicts the fact that $z^*$ maximizes $\sum\limits_{e\in E}(\sum\limits_{i\in \N}\gamma_e^i )z_e^*$. Furthermore, there can be no even length cycle or path $c$ either, since otherwise $z^*$ is not extreme: $z^* = \frac{1}{2}(z_1+z_2)$, where  $z_1=z^*-\varepsilon \chi_{c_1}+\varepsilon \chi_{c_2}$ and $z_2=z^*+\varepsilon \chi_{c_1}-\varepsilon \chi_{c_2}$ for some sufficiently small enough $\varepsilon>0$ and $c=c_1\cup c_2$ is chosen such that the edges of $c$ alternate between $c_1$ and $c_2$.

Hence, we conclude that the fractional edges of $z^*$ gives a vertex disjoint set of odd cycles $\mathcal{C}$. 
If $\graph$ is bipartite, then this is a contradiction, which shows that the core is nonempty in this case, and $z^*$ corresponds to a Pareto-optimal exchange in the core. 

Suppose that $\graph$ is not bipartite.

We can suppose that $\mathcal{C}$ is an independent set of odd cycles. Otherwise, if there exists an edge $(u,v)$ with $u\in c_{j_1},v\inc_{j_2}$, $j_1\ne j_2$, then all vertices in $c_{j_1}\cup c_{j_2}$ can be covered with a matching. Hence, we may choose such a matching, and iterate this procedure until no such case exists.

From this, it follows that we can round $z^*$ to a matching $\Ex$ that satisfies that at most $\oddcycles (\graph) $ vertices are unmatched from these cycles. Hence, with $\oddcycles (\graph)$ additional donors, we can guarantee that the final supplemented exchange $\Ex^*$ satisfies $u_i(\Ex^*)\ge \lfloor y^*(E(U^i))\rfloor$,
so by Lemma~\ref{thm:ApplyScarf} it gives a solution in the $\oddcycles (\graph )$-supplemented core.

\end{proof}

\begin{remark}\label{rem:localchange}
The $\oddcycles (\graph )$-supplemented core exchange in Theorem~\ref{thm:arbitrary} can be chosen in such a way that each organization $i$ requires at most $\frac{|U^i|}{3}$ additional donors. 
\end{remark}

Next, we show that in a random compatibility graph, the expected maximum number of independent odd cycles grows logarithmically with the size of the graph. This shows that in a typical random compatibility graph, compared with the market size, only a small number of altruistic donors are needed to stabilize the exchange.

\begin{lemma}\label{thm:logbound}
    Assuming a constant number $g$ of groups, for any distribution of the vertices to the groups,  in a random kidney exchange graph $\graph$, the expected value   $\mathbb{E}[\oddcycles ( \graph)]=\mathcal{O}(\log |V|)$ and as the size of graph increases,  the probability that  $\oddcycles ( \graph)=\mathcal{O}(\log |V|)$ approaches 1. 
\end{lemma}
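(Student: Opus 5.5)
The plan is a first-moment (union bound) argument on a witness structure that every large independent family of odd cycles must contain. Throughout, $\graph$ is the undirected mutual-compatibility graph. Two pairs $u$ from group $a$ and $v$ from group $b$ are adjacent with probability $q_{ab}:=p_{ab}p_{ba}$, independently over unordered vertex pairs, because the arcs $(u,v)$ and $(v,u)$ are generated independently. Since there are only $g$ groups, the set of positive values $q_{ab}$ is finite, and we let $q:=\min\{q_{ab}\mid q_{ab}>0\}$, a constant. Write $n:=|V|$.

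\textbf{Step 1 (witness structure).} Suppose $\mathcal{C}=\{c_1,\dots,c_m\}$ is an independent family of odd cycles. From each $c_k$ I pick one edge $\{u_k,v_k\}$, with $u_k$ in group $a_k$ and $v_k$ in group $b_k$; since the edge exists, $q_{a_kb_k}>0$. By pigeonhole over the at most $g^2$ ordered group pairs, some pair $(a,b)$ with $q_{ab}>0$ occurs for at least $m'\ge m/g^2$ indices. I relabel these indices $1,\dots,m'$, orienting each chosen edge so that $u_k$ lies in group $a$ and $v_k$ in group $b$.

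Independence of the cycles gives the following. The $2m'$ endpoints are distinct. For all $k\neq l$, the pair $\{u_k,v_l\}$ is a non-edge. These are $m'(m'-1)$ distinct vertex pairs, and each is a group-$(a,b)$ pair, so each is absent independently with probability $1-q_{ab}\le 1-q$. The case $a=b$ causes no trouble, because the pairs remain distinct.

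\textbf{Step 2 (union bound).} Fix $m'$. The number of choices of the ordered sequences $(u_k,v_k)_{k\le m'}$ and of the group pair is at most $g^2 n^{2m'}$. For a fixed choice, the probability that all $m'$ chosen edges are present and all cross pairs are absent is at most $(1-q)^{m'(m'-1)}\le e^{-q\,m'(m'-1)}$. Hence
\[
\Pr\big[\oddcycles(\graph)\ge g^2m'\big]\;\le\; g^2\exp\big(2m'\ln n - q\,m'(m'-1)\big).
\]
Take $m'=\lceil (2\ln n+2)/q\rceil+1$. Then $q(m'-1)\ge 2\ln n+2$, so the exponent is at most $m'(2\ln n-2\ln n-2)=-2m'\le -\Omega(\ln n)$. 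In fact, enlarging the constant gives a bound of $O(n^{-2})$. Consequently $\Pr[\oddcycles(\graph)> C\log n]\to 0$ for $C=O(g^2/q)$, which is the high-probability claim.

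\textbf{Step 3 (expectation).} Deterministically $\oddcycles(\graph)\le n/3$, since the cycles are vertex-disjoint and each has at least three vertices. Therefore
\[
\mathbb{E}[\oddcycles(\graph)]\le C\log n+\tfrac{n}{3}\cdot O(n^{-2})=O(\log n).
\]
Neither step depends on how vertices are distributed among the groups, so the bound holds for any distribution. I expect the main obstacle to be Step 1 rather than the calculation. Some pairs may have $q_{ab}=0$ (for example, blood-type incompatible pairs), so independent sets in $\graph$ can be linear in size, and a naive argument that picks one vertex per cycle fails. The remedy is to witness each cycle by an \emph{edge} of a positive-probability group pair. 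The cross non-adjacencies between these edges then all carry probability at least $q>0$ of being violated, and that is what drives the union bound.
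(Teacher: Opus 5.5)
Your proof is correct and follows essentially the same route as the paper. Both arguments witness each independent odd cycle by one edge (an induced matching), pigeonhole onto a single group pair, bound the probability that all $m'(m'-1)$ cross pairs are non-edges, apply a union bound, and convert the tail bound into an expectation bound. Your version is somewhat cleaner than the paper's: you use the mutual-compatibility probability $p_{ab}p_{ba}$ and handle the case $a=b$ directly, rather than assuming there are no edges within a group.
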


On the other hand, if the compatibility graph can be represented using types, we show that the maximum number of independent odd cycles is at most one-third the number of types.

\begin{lemma}\label{lem:type-bound}
    If a graph $\graph$ can be represented with $t$ types, then $\oddcycles (\graph)\le \frac{t}{3}$. 
\end{lemma}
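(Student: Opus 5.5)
The plan is to show that each odd cycle in an independent family must use at least three distinct types, and that no type can be shared between two cycles of the family. Counting types then gives $3l\le t$ for a family of $l$ cycles, which is the claimed bound $\oddcycles(\graph)\le \frac{t}{3}$. We work with the undirected mutual-compatibility graph $\graph$, as in Section~\ref{sec:pairwise}. A type representation of the directed graph induces one of the undirected graph: $u,v$ are adjacent iff both $(u,v)$ and $(v,u)$ are arcs, and this depends only on $f(u),f(v)$. The convention that there is no arc inside a single type carries over, so each type class $f^{-1}(t_i)$ is an independent set in $\graph$.

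\emph{Step 1: each odd cycle uses at least three types.} Fix an odd cycle $c$ in $\graph$. Consecutive vertices of $c$ are adjacent, so they have different types. If $c$ used only two types, these would have to alternate around $c$, giving a proper $2$-colouring of an odd cycle, which is impossible. Hence $|f(c)|\ge 3$.

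\emph{Step 2: independent cycles use disjoint type sets.} Let $c_1,\dots,c_l$ be independent and suppose some type $\tau$ appears on $u\in c_i$ and $u'\in c_j$ with $i\ne j$. Since $u$ lies on a cycle, it has a neighbour $w\in c_i$. Adjacency depends only on types and $f(u')=f(u)$, so $u'$ is also adjacent to $w$. This gives an edge between $c_j$ and $c_i$, contradicting independence. (Here $u'\ne w$, because the cycles are vertex-disjoint.) So the sets $f(c_1),\dots,f(c_l)$ are pairwise disjoint subsets of $T$, each of size at least $3$. Therefore $3l\le t$, and $\oddcycles(\graph)\le t/3$.

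The only delicate point is Step 2, specifically justifying that type-based adjacency transfers to the undirected mutual graph. This holds because both arc directions are determined by the ordered type pair. The no-loop-within-type assumption matters in Step 1; without it a same-type edge could occur, but the paper explicitly assumes it away.
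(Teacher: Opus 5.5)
Your proof is correct. Step 2 is exactly the paper's argument: a type shared by two cycles forces an edge between them, via a neighbour of the shared-type vertex.

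Where you differ is in showing that each cycle accounts for at least three types. The paper first normalizes the family. If a type repeats inside an odd cycle, the two same-type vertices are non-adjacent, and transferring adjacency along the type produces a chord. The cycle is then replaced by a shorter odd cycle on a subset of its vertices. Iterating gives a maximum independent family in which every type occurs at most once overall, and $|c|\ge 3$ finishes the count.

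You instead observe that the type labelling is a proper colouring of $\graph$, because there are no edges within a type. An odd cycle is not $2$-colourable, so it must use at least three types, and no cycle needs to be modified.

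Your route is shorter. It also sidesteps the chord-shortening step, where one must check that the new edge really is a chord and pick the odd side of the split. The paper's route gives the slightly stronger structural fact that some maximum independent family uses each type at most once, but the bound does not need it. Your note that a type representation of the directed graph induces one for the mutual-compatibility graph is a detail the paper leaves implicit.
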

\begin{proof}
We show that there exists an independent set of odd cycles of size $\oddcycles (\graph)$ that contains each type at most once. As odd cycles have at least 3 vertices, the statement will follow.  

    Take any maximum size set of independent odd cycles $\mathcal{C}$. Suppose for the contrary that $t_i$ appears in cycles $c_{j_1},c_{j_2}$. Then the neighbors of the type $t_i$ vertex in $c_{j_1}$ are connected to the type $t_i$ vertex in $c_{j_2}$ too with an edge, which contradicts the independence of the cycles. Similarly, if a type $t_i$ appears twice in the same odd cycle $c_j$, then there exists an edge between two non-adjacent (as a type is never compatible with itself) vertices in the cycle. Hence, we can obtain a smaller odd cycle $c_j'$ and replace $c_j$ with $c_j'$ in $\mathcal{C}$. This still gives an independent set of odd cycles, and after a finite number of steps, no type will appear twice within an odd cycle either. 
\end{proof}

Thus, we obtain the following corollary establishing the existence of the supplemented core under both models of compatibility.
\begin{corollary}
 In a partition exchange economy with a random compatibility graph, the \( \mathcal{O}(\log |V|) \)-supplemented core is nonempty with probability approaching 1 as the market becomes large.  In a partition exchange economy representable by \( t \) types, the \( \frac{t}{3} \)-supplemented core is always nonempty.
\end{corollary}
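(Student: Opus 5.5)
The corollary follows by combining Theorem~\ref{thm:arbitrary} with the two bounds on $\oddcycles(\graph)$ just established. Monotonicity is immediate: if $d \le d'$, a $d$-supplemented core exchange is also a $d'$-supplemented core exchange, since $|V^0|\le d\le d'$. So for any partition exchange economy with pairwise exchanges it suffices to show that $\oddcycles(\graph)$ is at most the claimed quantity, and Theorem~\ref{thm:arbitrary} then gives a nonempty $\oddcycles(\graph)$-supplemented core.

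For the type-based statement, Lemma~\ref{lem:type-bound} gives $\oddcycles(\graph)\le t/3$ deterministically whenever $\graph$ is representable by $t$ types. Hence the $\oddcycles(\graph)$-supplemented core, which is nonempty by Theorem~\ref{thm:arbitrary}, is contained in the $\lfloor t/3\rfloor$-supplemented core. Note that Theorem~\ref{thm:arbitrary} applies to the undirected mutual-compatibility graph. Its representability by the same $t$ types follows from the directed representation: $\{u,v\}$ is an undirected edge iff both $(u,v)$ and $(v,u)$ are arcs, and this depends only on $f(u),f(v)$. No type is adjacent to itself in either graph. So Lemma~\ref{lem:type-bound} applies as stated.

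For the random graph statement, Lemma~\ref{thm:logbound} gives a constant $C$ (depending on the fixed number $g$ of groups and the probabilities $p_{ij}$) such that $\Pr[\oddcycles(\graph)\le C\log|V|]\to 1$ as $|V|\to\infty$. On this event, Theorem~\ref{thm:arbitrary} yields a nonempty $\oddcycles(\graph)$-supplemented core, hence a nonempty $C\log|V|$-supplemented core. Again, the relevant graph is the mutual-compatibility graph; I take Lemma~\ref{thm:logbound} to be stated for exactly this graph. If it were instead read for the directed graph, the fix is that mutual edges between groups $i,j$ appear independently with probability $p_{ij}p_{ji}$, so the undirected graph is again a random graph of the same group-based form, and the lemma applies.

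There is no substantive obstacle. The only care points are the monotonicity in $d$, ensuring the bounds refer to the mutual-compatibility graph used in Theorem~\ref{thm:arbitrary}, and the convention that the $\mathcal{O}(\log|V|)$ constant is uniform over distributions of vertices to the fixed set of groups.
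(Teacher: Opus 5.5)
Your proposal is correct and follows the paper's route: the corollary is obtained by plugging the bounds $\oddcycles(\graph)=\mathcal{O}(\log|V|)$ (which holds with probability tending to 1 by Lemma~\ref{thm:logbound}) and $\oddcycles(\graph)\le t/3$ (Lemma~\ref{lem:type-bound}) into Theorem~\ref{thm:arbitrary}. Your extra bookkeeping is valid and makes explicit what the paper leaves implicit: monotonicity of the $d$-supplemented core in $d$, and the fact that the undirected mutual-compatibility graph inherits both the type representation and the group-based random structure, with edge probabilities $p_{ij}p_{ji}$.
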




\subsection{Cyclic Exchange}\label{sec:cycle}

Next, we consider a general case where exchanges involve cycles of size at most \(\Delta\).  Recall that in this exchange model, each exchange corresponds to a subgraph \(\Ex\), which is a union of directed cycles of size at most \(\Delta\).


However, as we show next, if the number of types $t$ is also some constant, then a good upper bound exists on the required number of additional donors.

Our main result on the Supplemented Core for cyclic exchange is as follows.

\begin{theorem}\label{thm:cycle-core}

Given a partitioned exchange economy with a compatibility graph that can be represented by $t$ types and a bound of $\Delta$ on the exchange cycle lengths, the $(\Delta -1)n(t+1)$-supplemented core is always nonempty. Furthermore, it also contains a Pareto-optimal solution. 
\end{theorem}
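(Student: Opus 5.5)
We follow the same two-stage scheme as in the proof of Theorem~\ref{thm:arbitrary}. First we apply Lemma~\ref{thm:ApplyScarf} with cycle bound $\Delta$ to get a fractional exchange $y^*$ in $\G$. It then suffices to build an integral exchange $\Ex^*$ in $\GG$ with $|V^0|\le (\Delta-1)n(t+1)$ and $u_i(\Ex^*)\ge b_i:=\lfloor \sum_{c\in\mathcal{C}_\Delta}\gamma^i_c y^*_c\rfloor$ for every $i\in\N$.

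The key step is to exploit the type representation and work with aggregated variables. Since a cycle's feasibility depends only on the types of its vertices, we group cycles by their \emph{pattern}: the sequence of types along the cycle together with the owning organization of each vertex. There are finitely many patterns. Each pattern uses at most $\Delta$ vertices from each (type, organization) class $(\tau,i)$. The polyhedron has variables $w_p\ge 0$, one for each pattern $p$. It has one capacity row per class $(\tau,i)$, whose right-hand side is the number of vertices of type $\tau$ owned by $i$, and one demand row $\sum_p \gamma^i_p w_p\ge b_i$ per organization. Aggregating $y^*$ gives a feasible point of this polyhedron. Conversely, any integral $w$ satisfying the capacity rows can be realized as a vertex-disjoint exchange, because vertices within a class are interchangeable (no type is adjacent to itself, so distinct vertices of a cycle in the same class never conflict).

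Take an extreme point $w^*$ that maximizes total utility, and round it down to $\lfloor w^*\rfloor$. The main obstacle is bounding the loss: we need the number of fractional coordinates to be at most roughly $n(t+1)$, not the number of classes $nt$ times something larger. By Lemma~\ref{lemma:extremepoint}, the number of fractional coordinates is at most the number of tight rows among those that remain after the integral parts are fixed. A cruder approach would count $nt$ capacity rows plus $n$ demand rows, giving $n(t+1)$ fractional patterns. Each dropped fractional pattern $p$ costs organization $i$ at most $\gamma^i_p\le\Delta$ transplants. However, we must compensate with altruists. Each dropped cycle of length $\ell\le\Delta$ can be replaced by a chain that starts from one altruist and covers $\ell-1$ of its patients, so patients are covered up to one per cycle. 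Alternatively, we add one altruist per missing patient but bound the per-cycle loss by $\Delta-1$, because the fractional part $w_p-\lfloor w_p\rfloor<1$ and the floor in $b_i$ absorbs one unit. Combining these, each of the at most $n(t+1)$ fractional patterns costs at most $\Delta-1$ altruists, which gives the total $(\Delta-1)n(t+1)$. The factor $n$ arises from the class rows $(\tau,i)$ together with the $n$ demand rows, i.e.\ $nt+n$. The hardest part is making this per-pattern accounting rigorous, in particular showing that one altruist plus the $\ell-1$ remaining pairs of a dropped cycle yields a valid chain that restores enough utility. We would first try to handle this by keeping a chain along the cycle's arcs, where the altruist donates to the first patient.

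Finally, for Pareto optimality we proceed as in Theorem~\ref{thm:arbitrary}: among all exchanges in $\GG$ that meet the demands $u_i\ge b_i$, we pick a Pareto-optimal one. Any Pareto improvement still satisfies the demands, so by Lemma~\ref{thm:ApplyScarf} the chosen exchange remains in the supplemented core.
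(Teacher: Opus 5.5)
Your setup matches the paper: Lemma~\ref{thm:ApplyScarf}, a class-aggregated polyhedron with one capacity row per (type, organization) and one demand row per organization, realizability of integral points, and the Pareto step. The gap is where you yourself flag it. Your one-shot rounding of an extreme point does not give the constant $(\Delta-1)n(t+1)$, because neither of your mechanisms bounds the cost \emph{per fractional pattern} by $\Delta-1$.

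\begin{itemize}
\item \emph{The floor in $b_i$.} It absorbs at most one unit per organization, not one per pattern. After rounding down, the deficit of organization $i$ is bounded only by $\lfloor\sum_p\gamma^i_p\{w^*_p\}\rfloor$. Nothing in your argument rules out two fractional patterns lying entirely in $U^i$ with fractional parts near $1$. That gives a bound of about $2\Delta-1>2(\Delta-1)$. Summed over organizations, your argument yields a total of order $\Delta\cdot n(t+1)$, not $(\Delta-1)n(t+1)$.
\item \emph{The chain repair.} Re-inserting a dropped copy of $p$ with a single altruist needs the other $\ell-1$ vertices of $p$ to be free. After rounding down, the tight capacity row of class $(\tau,i)$ only guarantees about $\sum_p\beta^{\tau,i}_p\{w^*_p\}$ free vertices. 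That is fewer than the $\sum_{p}\beta^{\tau,i}_p$ needed to re-insert every fractional pattern using that class. Each missing vertex then costs its own altruist, so you are again at roughly $\Delta$ per pattern.
\end{itemize}

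The missing idea is the paper's \emph{iterative relaxation} with per-row, not per-pattern, accounting.
\begin{itemize}
\item Fix the integral coordinates and update the right-hand sides.
\item Delete a capacity row $A_{\tau,i}$ once the sum of its remaining coefficients is at most $a_{\tau,i}+\Delta-1$. Delete a demand row once its remaining requirement is at most $\Delta-1$.
\item Recompute a utility-maximizing extreme point and repeat.
\end{itemize}

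Termination comes from a fractional token argument. Each fractional $z_c$ gives $\gamma^i_c z_c/\Delta$ to $B_i$ and $\beta^{\tau,i}_c(1-z_c)/\Delta$ to $A_{\tau,i}$, so it hands out at most one token. Every tight row that cannot yet be deleted receives at least one token: it has $b_i\ge\Delta$, respectively $\sum_c\beta^{\tau,i}_c(1-z_c)\ge\Delta$. With Lemma~\ref{lemma:extremepoint}, this forces every token to be used exactly. Then the tight capacity rows and the tight demand rows both sum to the all-$\Delta$ vector on the support, contradicting linear independence. So a row can always be deleted.

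Each of the $t+1$ rows of an organization is violated by at most $\Delta-1$ when it is dropped. This gives $(\Delta-1)(t+1)$ altruists per organization, which is precisely the bound your proposal cannot reach.
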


We provide the proof in the appendix. The idea behind the proof follows from our description in Section~\ref{sec:scarf}. We begin by choosing an optimal type representation for $\graph$ and finding a fractional domination solution obtained from Scarf's lemma and then we round it to an integral solution. Due to the complex combinatorics involved in cyclic exchanges, the rounding procedure may violate the capacity constraints on the number of donors of each type for each country. Nevertheless, our result guarantees that, for each organization, the number of additional donors needed depends only on the number of types. 

\section{Applications to One-Sided Exchange Economy }\label{sec:bipartite}

In this section, we apply  Theorem~\ref{thm:arbitrary} in the special case where $\G$ is a bipartite graph, to a one-sided market setting. In this setting, our result  extends the main result of \citet{echenique2024stable}.

There is a set of  $m$ goods, denoted by $M = M_1 \cup \cdots \cup M_n$, where $M_i$ represents the set of goods originally owned by agent $i$. Agents have preferences over bundles of goods, and we denote by $v_i(X)$ the utility that agent $i$ derives from consuming bundle $X \subset M$.

\begin{definition}
   A partition \( M = M_1^* \cup \dots \cup M_n^* \) is in the weak core if there is no strongly blocking coalition. A strongly blocking coalition is a subset of agents \( S \subset N \) and a partition of \( \cup_{i \in S} M_i \) into \( \cup_{i \in S} M'_i \) such that all agents in \( S \) are strictly better off, i.e.,   for all $ i \in S, \quad v_i(M'_i) > v_i(M^*_i).$

\end{definition}

\citet{echenique2024stable} provide conditions on the economy and utility functions under which the weak core exists. Specifically, they assume that goods are partitioned into categories, that is,  
$
M = \bigcup_{k=1}^{K} O^k,
$  
where each \( O^k \) represents a set of goods that belong to the same category. We call such a market a \emph{categorical economy.}

Each agent \( i \) has a set of acceptable objects \( G_i \), for which they obtain a utility of 1, while objects outside \( G_i \) have a utility of 0. The utility of agent \( i \) for a bundle \( X \) is defined as  
\[
v_i(X) = \sum_{k=1}^{K} \min\{|X \cap O^k \cap G_i|, 1\}.
\]  
That is, each agent can consume at most one good from each category and derive a utility of 1 from any acceptable good. This class of utility functions is called \emph{additively separable and dichotomous preferences}.

\begin{theorem}[\citet{echenique2024stable}]\label{theo:federico}
    A categorical economy with  additively separable and dichotomous preferences has a nonempty weak core.
\end{theorem}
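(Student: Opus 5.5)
The plan is to reduce the categorical economy to a partition exchange economy with pairwise exchanges on a bipartite compatibility graph, and then invoke the bipartite case of Theorem~\ref{thm:arbitrary}.

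\textbf{The reduction.} For each agent $i$ and each category $k$, create a ``slot'' vertex $s_{i,k}$ belonging to organization $i$. For each good $g\in M_i$, create a ``good'' vertex $w_g$ that also belongs to organization $i$. Join $s_{i,k}$ to $w_g$ by an (undirected, mutual) edge whenever $g\in O^k\cap G_i$. Every edge then joins a slot vertex to a good vertex, so $\graph$ is bipartite.

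\textbf{Choosing the utility.} Slot vertices should count toward utility and good vertices should not. The model lets us do this: put $U^i=\{s_{i,k}\}_k$, and treat the good vertices as elements of $V^i\setminus U^i$, which contribute nothing to $u_i$. Since the graph is undirected, we apply the pairwise-exchange version, in which feasible exchanges are matchings. The proof of Theorem~\ref{thm:arbitrary} only uses the incidence structure and the sets $U^i$, so treating good vertices as non-counted endpoints is harmless. If one prefers, good vertices can instead be declared unpaired patients or altruistic donors; the only requirement is that they are excluded from $U^i$.

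\textbf{The correspondence.} Given a matching $\Ex$, give agent $i$ the set of goods $g$ whose vertex $w_g$ is matched to some slot $s_{i,k}$, and give every unmatched good arbitrarily, say to its owner. Agent $i$'s utility is then at least the number of its matched slots. It is exactly that number, because distinct matched slots lie in distinct categories and each receives an acceptable good. Conversely, any allocation of $\bigcup_{i\in S}M_i$ among a coalition $S$ yields a matching inside $\graph[\cup_{i\in S}V^i]$ that realizes $v_i(M_i')$: for each category in which $i$ holds an acceptable good, match $s_{i,k}$ to one such good. Matched slots therefore count exactly $v_i(M'_i)$.

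\textbf{Conclusion.} Theorem~\ref{thm:arbitrary} gives a core matching $\Ex^*$ in the bipartite graph, and it induces an allocation $M^*$ with $v_i(M_i^*)\ge u_i(\Ex^*)$. Suppose a coalition $S$ with partition $M'$ strongly blocks $M^*$. Its matching $\Ex'$ gives $u_i(\Ex')=v_i(M'_i)>v_i(M^*_i)\ge u_i(\Ex^*)$ for every $i\in S$, so $S$ blocks $\Ex^*$, a contradiction.

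\textbf{Main obstacle.} The delicate step is checking that the apparatus behind Theorem~\ref{thm:arbitrary} accepts non-utility vertices that can still be matched to each other's slots. This means checking that Lemma~\ref{thm:ApplyScarf} and the rounding allow edges between a $U^i$ vertex and a $V^j\setminus U^j$ vertex. Both arguments depend only on $\gamma^i_e=|e\cap U^i|$ and on vertex capacity one, so this should go through. It also suggests that the same construction handles the more general binary assignment valuations mentioned in the introduction.
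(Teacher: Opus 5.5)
Your proposal is correct and follows essentially the paper's route. The paper proves the more general binary-assignment-valuation result by making goods non-utility (altruistic) vertices of their owners, and making positions utility vertices of agent $i$. For a categorical economy there is exactly one position per category, matching your slots $s_{i,k}$, joined to good $g$ iff $\alpha^i_{gj}=1$ (i.e.\ $g\in O^k\cap G_i$). It then invokes the bipartite case of Theorem~\ref{thm:arbitrary}.

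One small slip: you say the induced allocation gives agent $i$ \emph{exactly} its number of matched slots. Unmatched goods returned to their owners can add utility, so in general only $v_i(M^*_i)\ge u_i(\Ex^*)$ holds. That inequality is all your blocking argument uses, so the proof is unaffected.
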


We next show that Theorem~\ref{thm:arbitrary} implies a more general result compared to Theorem~\ref{theo:federico}. In particular, when the utilities of the agents are in the class of binary assignment valuation, defined below,  the core is nonempty.

\begin{definition}[Binary Assignment Valuation]
    A valuation function \( v(.) \) over a set \( M \) of objects is binary assignment valuation if 
there exists a set of positions \( J \) and a $\{0,1\}$ matrix \( \alpha \) of dimension \( |M| \times |J| \)  such that for any set \( X \subseteq M \),
$
v(X) = \max_{z} \sum_{i \in X} \sum_{j \in J} \alpha_{ij} z_{ij},
$ where \( z \) varies over all possible assignments of elements in \( X \) to elements in \( J \).

\end{definition}

The key difference between  binary assignment valuation and the assignment valuation in \citet{milgrom2009assignment} is that $\alpha_{ij}$ is restricted to binary values (0 or 1). This class is strictly more general than the assumption in \citet{echenique2024stable}. 
Specifically, the following construction corresponds to a categorical economy with additively separable and dichotomous preferences.

Fix an agent $i$, and let $J^i$ be a set of $K$ positions, where each position corresponds to a good category. The matrix $\alpha$ is defined as follows: for $j \in J^i$ and $g \in M$, $\alpha_{gj} = 1$ if and only if good $g$ belongs to category $j$ and is acceptable to the agent. It is straightforward to see that this construction precisely characterizes a categorical economy with additively separable and dichotomous preferences.

The following example shows that binary assignment valuation is strictly more general than additively separable and dichotomous preferences of a categorical economy. Consider three goods: $a, b,$ and $c$. The valuation function is given by:

\[
v(a, b, c) = v(a, b) = v(b, c) = v(a, c) = 2, \quad v(a) = v(b) = v(c) = 1.
\]
This valuation does not correspond to any additively separable and dichotomous preferences of a categorical economy. However, it corresponds to two positions, $J = \{1,2\}$, with the matrix $\alpha$ having entries equal to 1 for all pairs of positions and goods.

\begin{theorem}
    If valuations of the agents are binary assignment valuation, then the  weak core is nonempty.
\end{theorem}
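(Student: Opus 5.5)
We reduce the statement to the bipartite case of Theorem~\ref{thm:arbitrary}. We build a partition exchange economy with pairwise exchanges whose mutual compatibility graph $\graph$ is bipartite, and whose core exchanges translate into weak core partitions of $M$.

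\textbf{Construction.} For each agent $i$, fix positions $J^i$ and the $\{0,1\}$ matrix $\alpha^i$ representing $v_i$, taking the $J^i$ pairwise disjoint across agents. The vertices are:
\begin{itemize}
\item one vertex $g$ for each good $g\in M$, owned by the agent $i$ with $g\in M_i$; these vertices are treated as altruistic-type vertices, i.e.\ they lie in $V^i\setminus U^i$;
\item one vertex for each position $j\in J^i$, placed in $U^i$, so that covering it contributes utility to $i$.
\end{itemize}
We add an undirected edge $\{g,j\}$ for $j\in J^i$ exactly when $\alpha^i_{gj}=1$. All edges run between goods and positions, so $\graph$ is bipartite. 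By Theorem~\ref{thm:arbitrary}, the core of this economy is nonempty. Note that Theorem~\ref{thm:arbitrary} places no restriction on which vertices lie in $U^i$, so having the goods as non-$U$ vertices is allowed.

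\textbf{Translation.} Take a core matching $\Ex$ and define $M_i^*$ as the set of goods matched to positions of $J^i$. Goods left unmatched are given to their original owners; they could instead be given to anyone, since extra goods never lower an assignment valuation. Because $\Ex$ restricted to $M_i^*\times J^i$ is an assignment, we get $v_i(M_i^*)\ge u_i(\Ex)$.

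For the reverse direction, take any coalition $S$ and any redistribution $\{M_i'\}_{i\in S}$ of $\cup_{i\in S}M_i$. For each $i$, an optimal assignment of $M_i'$ to $J^i$ yields a matching in $\graph[\cup_{i\in S}V^i]$. This uses only goods owned by $S$ and positions of $S$, and it gives $u_i=v_i(M_i')$. Conversely, any matching in that induced subgraph yields such a redistribution with $v_i(M_i')\ge u_i$.

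\textbf{Main obstacle.} The step needing care is the case where $v_i(M_i^*)>u_i(\Ex)$. This cannot break stability. If $S$ blocks the partition, with $v_i(M_i')>v_i(M_i^*)\ge u_i(\Ex)$ for all $i\in S$, then the corresponding matching strictly improves every $i\in S$ in the exchange economy. This contradicts $\Ex$ being in the core.

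We must also check that the coalition's alternative matching lies entirely inside $\graph[\cup_{i\in S}V^i]$. This holds because goods can only be matched to positions of agents in $S$. Finally, we confirm that the paper's notion of coalition deviation allows a coalition to redistribute its own goods but not others' goods; this matches exactly the restriction to the induced subgraph.
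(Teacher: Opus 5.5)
Your proposal is correct and is essentially the paper's proof: you build the same bipartite instance (goods as altruistic donors kept by their original owners, positions $J^i$ as patient--donor pairs in $U^i$, edges given by $\alpha^i$) and then invoke the bipartite case of Theorem~\ref{thm:arbitrary}. Your explicit translation step---returning unmatched goods to their owners and using $v_i(M_i^*)\ge u_i(\Ex)$ so that any blocking redistribution yields a strictly improving matching in the induced subgraph---spells out what the paper leaves implicit.
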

\begin{proof}
This is a corollary of Theorem~\ref{thm:arbitrary}. We need to construct a compatible graph for kidney exchange that corresponds to the one-sided market with binary assignment valuation. 

Let $J^i$ be the set of positions that describe the valuation of agent $i$, and let $\alpha^i$ be the 0-1 matrix of dimension $|M| \times |J^i|$ representing that valuation description. The construction of the corresponding kidney exchange instance is as follows.

The goods $M$ correspond to altruistic donors, where $M^i$ is endowed by agent $i$. The set $J^i$ corresponds to the set of patient-donor pairs endowed by agent $i$. The compatible graph is a bipartite graph between $M$ and $\cup_{i \in \mathbb{N}} J^i$. There is an edge between $j \in J^i$ and $g \in M$ if and only if $\alpha^i_{gj} = 1$.

Because all the goods in $M$ correspond to altruistic donors, each agent's valuation only depends on how many positions can be matched. This is exactly the valuation of the agent under the binary assignment valuation assumption.
\end{proof}

\section{Numerical Results}\label{sec:simulation}
In this section, we present our simulation results. We also examine a strengthened version of the core, described below.

\begin{definition}
Given an exchange  $\Ex$,  and a coalition $P\subseteq N=\{1,..,n\}$, we say that $P$ is a \emph{TU blocking coalition to $\Ex$}, if there exists an exchange $\Ex'$ in the graph $\mathcal{G}[\cup_{i\in P}V_i]$ such that $\sum_{i\in P}u_i(\Ex')>\sum_{i\in P}u_i(\Ex)$. We say that $\Ex$ is \textit{in the TU core}, if there is no TU blocking coalition to $\Ex$. 
\end{definition}

Intuitively, $\Ex$ is in the TU core, if all coalitions $P\subseteq \N$ receive at least as many transplants in the solution combined, as they could achieve alone. 

Notice that TU core implies strong core, but not reversely. While in a weakly blocking coalition, the exchange $\Ex'$ must be a weak improvement for everyone, in a TU blocking coalition, some organizations may be worse, only the total number of transplants of the organizations must increase.

  Next, we define the \emph{supplemented TU-core}.

\begin{definition} Given additional altruistic donors \( V^0 \), an exchange \( \Ex^* \) in the extended graph \( \GG  \) is a \( V^0 \)-supplemented TU core if it is not TU blocked by any coalition \( P \subseteq N =\{1,..,n\} \). If \( |V^0| \leq d \), we also refer to \( \Ex^* \) as a \( d \)-supplemented TU core.
\end{definition} 

While we have seen that theoretically not much can be guaranteed in terms of even the strong core, our simulations showed that in practice, even this strongest version of the core is often nonempty and never needs many altruistic donors for existence.

\subsection{Algorithms Under Study}

We generated instances of partition exchange economies using a state-of-the-art generator~\cite{delorme2022improved,pettersson2021kidney} (which is a refinement over the widely used Saidman-generator by~\citet{saidman2006increasing}), similarly to previous works~\cite{benedek2023computing,benedek2024computing,druzsin2024performance}. We generated 10 instances with 1000 patient-donor pairs and 50 available altruists.

We benchmark three increasingly demanding stability notions: the TU-core, the strong core, and the (weak) core. 
We also consider finding solutions according to current (lexicographic) objectives in kidney exchange programs~\cite{biro2021modelling}, checking whether they are in the weak core, and adding altruists until the optimal solution to the lexicographic objectives becomes a supplemented-core solution. 

For computational feasibility, we explored relaxed versions of the core, where only coalitions of at most 4 players may block. This is a realistic constraint, as coordinated deviations of more players become increasingly difficult.
In the case of our simulations with $5$ players, this suffices for the core itself (as we maximize solution size, the coalition of all players cannot block).

All methods operate on the pre-computed cycle database that contains every feasible 2- and 3-cycle in the directed compatibility graph.

\paragraph{Cycle enumeration.}
Given a set of patient–donor vertices $V$ (paired participants only) and adjacency lists $A^{+}$, we enumerate all simple directed cycles of length $\Delta\in\{2,3\}$. Each cycle $c$ stores
(i) its ordered vertex tuple,
(ii) the length $\lvert c\rvert$,
(iii) per-player incidence counts $\gamma_{c}^{i}$ for player $i$,
(iv) the number of altruists $a_c$ on $c$, if any,
(v) the number of “same-blood-type’’ edges, and
(vi) a ``hardness’’ score defined as the max over vertices $v\in c$ of $\rho (v)^{-1}$, where $\rho (v)$ is the number of inarcs of $v$.
These statistics provide objective weights for the downstream integer programs.

\paragraph{TU-core with pre-added altruists.}
For the TU-core, for speedup, the altruists are injected in the beginning. To mirror real-world data (e.g., from~\cite{agarwal2019market}), we add only $5\%$ as many altruists as there are patients, chosen uniformly at random from the total set of generated altruists.
Let $\mathcal{C}_{\Delta}$ be the enumerated cycles (including altruistic ones) and $x_{c}\in\{0,1\}$ indicate whether cycle $c$ is selected.
We first compute the baseline coverage $K$ by solving the usual maximum-packing ILP on the non-altruist graph:
\[
\max \sum_{c\in C_{\Delta}} \lvert c \rvert  x_{c} \quad \text{s.t.}\quad
\sum_{c\ni v} x_{c} \le 1 \quad \forall v\in V, \quad x_{c}\in\{0,1\}.
\]
Using the resulting real-patient count $K$, we enforce TU-core stability by solving
\[
\min \sum_{c\in C} a_{c} x_{c}
\]
subject to
(i) the vertex-disjointness constraints above (now including altruists),
(ii) a coverage constraint $\sum_{c} (\lvert c \rvert - a_c) x_{c} \ge K$, and
(iii) coalition constraints $\sum_{c} \left(\sum_{i\in S} \gamma_{c}^{i}\right) x_{c} \ge \text{opt}_{S}$ for every coalition $S$ of size $\leq \texttt{max\_coal\_size}$, where $\text{opt}_{S}$ is obtained by solving the maximum-packing ILP subproblem on the subgraph induced by $S$.
The penalty $a_{c}$ counts edges pointing to altruists, inducing the lexicographic objective “maximize real patients, then minimize altruist use.’’
The output also records the number of distinct altruists selected.

\paragraph{Strong-core heuristic.}
This variant maintains the current cycle database without altruists and solves the base maximum-packing ILP. The size of this solution is then added as lower bound to the ILP to ensure that only maximum size solutions are viable, smaller solutions -- even if they are in the strong core -- are not sufficient. After every solve it checks strong-core stability: for each coalition $S$ (size $\leq$ max\_coal\_size) we solve a deviation ILP restricted to $S$ that requires each player $i\in S$ to receive at least her status-quo utility while the coalition as a whole gains $\geq 1$. If a blocking coalition is detected, we add the corresponding ``core cut’’ enforcing at least one more transplants than $S$'s total baseline utility $U_{S}^{\text{baseline}}$
\[
\sum_{c} \left(\sum_{i\in S} \gamma_{c}^{i}\right) x_{c} \ge U_{S}^{\text{baseline}} + 1,
\]
re-solve the ILP, and repeat. Altruists are added \emph{only} when the cut-augmented ILP becomes infeasible; in that case the next altruist (taken uniformly randomly from the pre-generated set of altruists) is inserted, the cycle database is recomputed, and the loop continues. The heuristic terminates once the current solution passes the strong-core check, or once we exhaust the altruist budget.

\paragraph{Weak-core heuristic.}
Identical to the strong-core heuristic except that blocking coalitions are detected by the standard weak-core condition (requiring only that each player in $S$ strictly improves). The same cut-accumulation and “add altruist on infeasibility’’ logic applies. Note that a strong-core solution with zero altruists immediately certifies weak-core stability, which we exploit in the simulation pipeline.

\paragraph{Lexicographic weak-core optimization.}
To explore the compatibility of current objectives in KEPs with weak-core constraints, we first solve a four-stage lexicographic ILP with disjointness constraints only:
\begin{enumerate}
    \item Maximize the number of real (non-altruist) transplants;
    \item Subject to (1), maximize the total number of cycles selected (which in turn minimizes the average length of the cycles);
    \item Subject to (1--2), maximize edges where donor and recipient blood types match;
    \item Subject to (1--3), maximize hard-to-match patients (in our case, according to the “hard-to-match'' score).
\end{enumerate}

This mirrors how solutions are found in practice~\cite{biro2021modelling}, although every country has slightly different objectives with slightly different lexicographic orderings, so we have taken the most important constraints in the most common priority order for our purposes.
Each stage uses the previous solution as a warm start and fixes the attained objective value before proceeding.
After each lexicographic solution we run weak-core verification; if it fails, we add one altruist uniformly randomly from the set of pre-generated altruists in the instance, rebuild the cycle database, reuse the previous selection as a warm start, and re-solve. This continues until a weak-core solution is found or no altruists remain.

\subsection{Simulation Framework}

We conduct simulations looping over the following parameters:
\begin{itemize}
    \item \textbf{Instances:} 10 XML instances following the default generator configuration in~\cite{delorme2022improved,pettersson2021kidney}.
    \item \textbf{Number of players:} $\N\in\{5,10,15,20,30\}$, representing the number of organizations or hospitals controlling disjoint subsets of vertices (patient-donor pairs). Vertices are assigned uniformly at random with Dirichlet noise to mimic heterogeneous sizes.
    \item \textbf{Instance truncation:} we evaluate four cohort sizes per instance: $100$, $200$, and $500$ vertices sampled uniformly at random. 
    \item \textbf{Cycle length cap:} $\Delta \in \{2,3\}$ is enforced uniformly across all algorithms.
    \item \textbf{Coalition size cap:} for all core checks we consider coalitions up to size $\texttt{max\_coal\_size}=4$ to make computing times feasible.
\end{itemize}

All simulations were run on an AMD Ryzen 7735HS processor with 16GB RAM.  

\subsection{Results}

\begin{figure}
    \centering
    \includegraphics[width=0.99\linewidth]{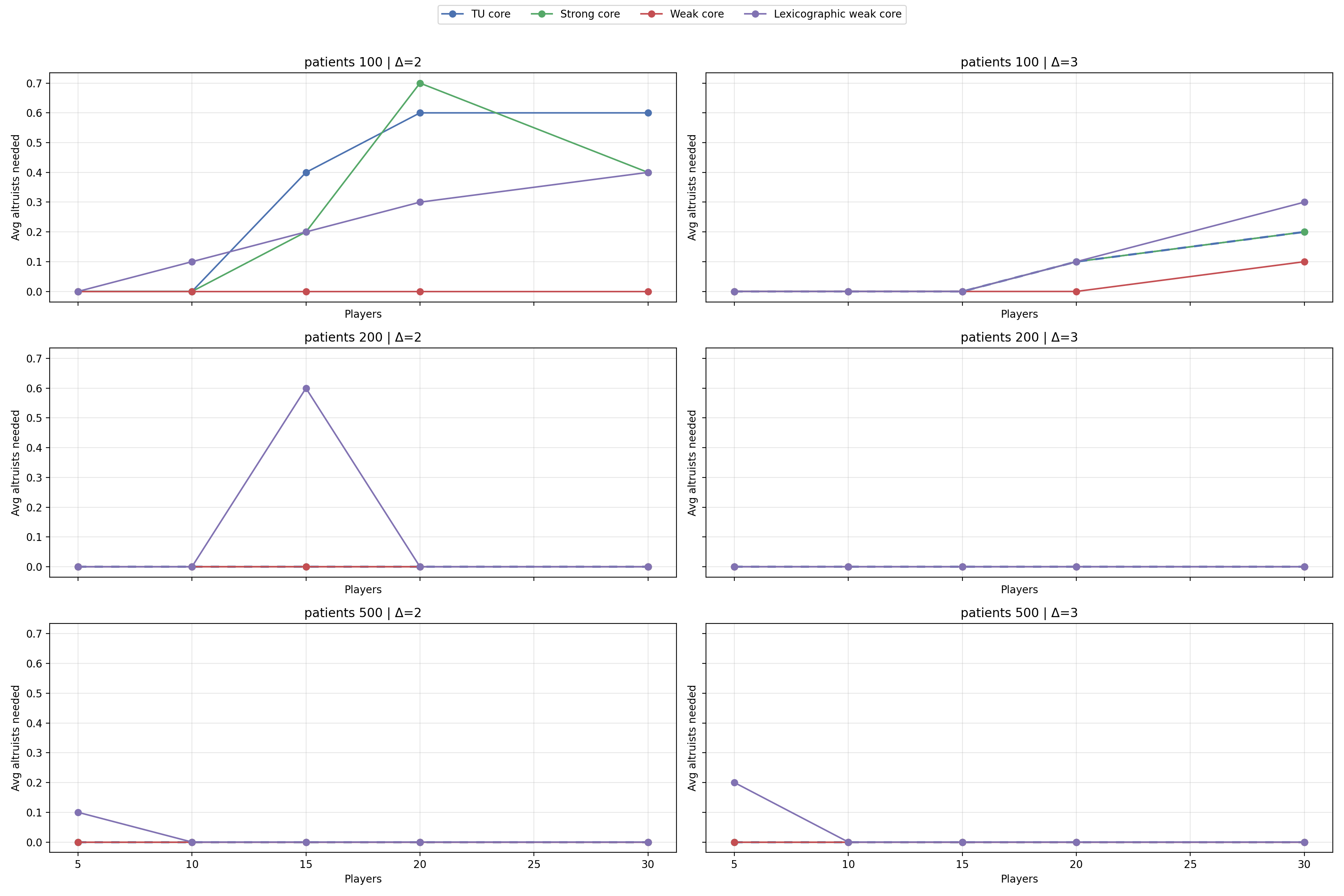}
    \caption{The average number of altruists needed in each simulation setting.}
    \label{fig:avgaltruists}
\end{figure}

\begin{figure}
    \centering
    \includegraphics[width=0.99\linewidth]{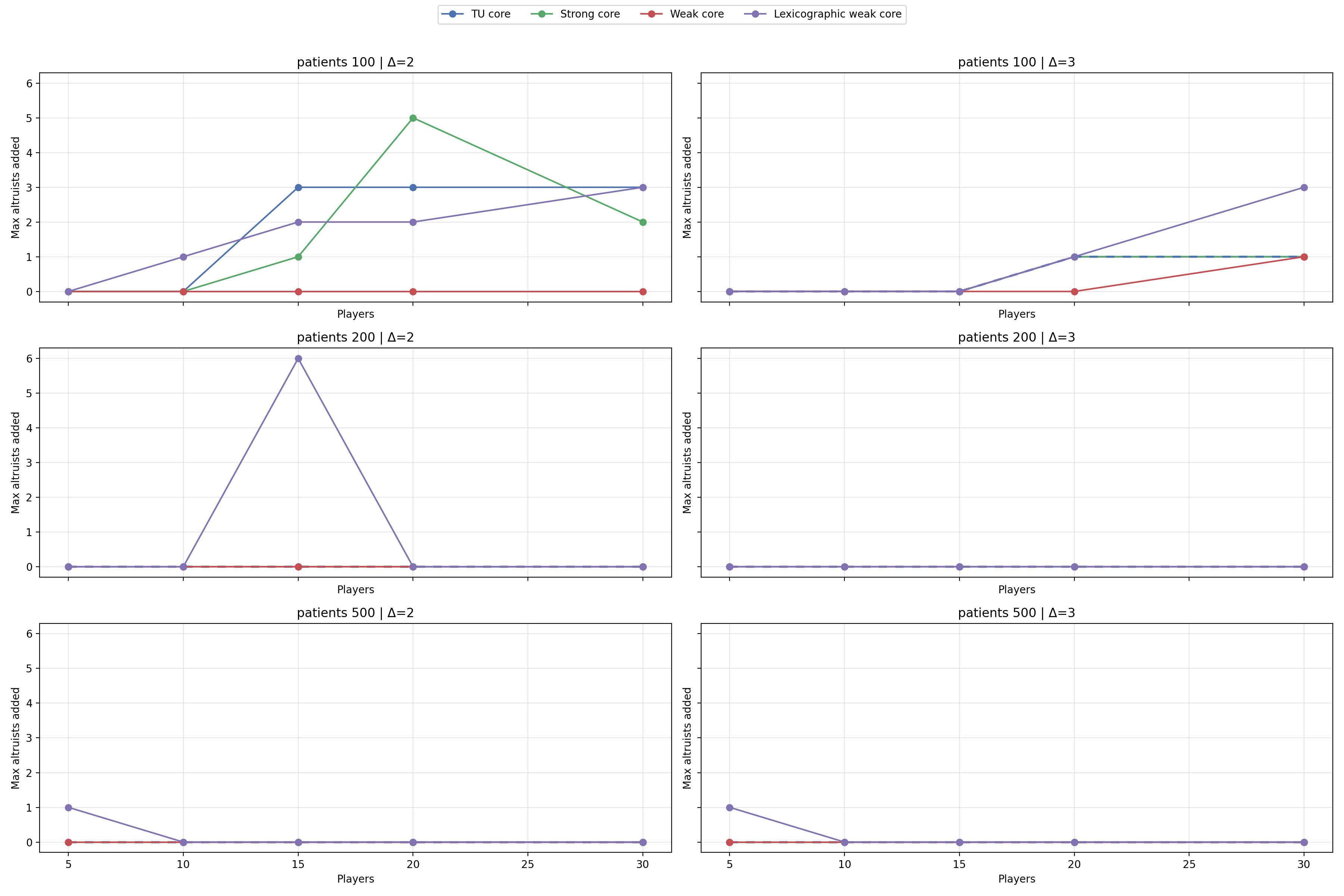}
    \caption{The maximum number of altruists needed in each simulation setting.}
    \label{fig:maxaltruists}
\end{figure}

\begin{figure}
    \centering
    \includegraphics[width=0.99\linewidth]{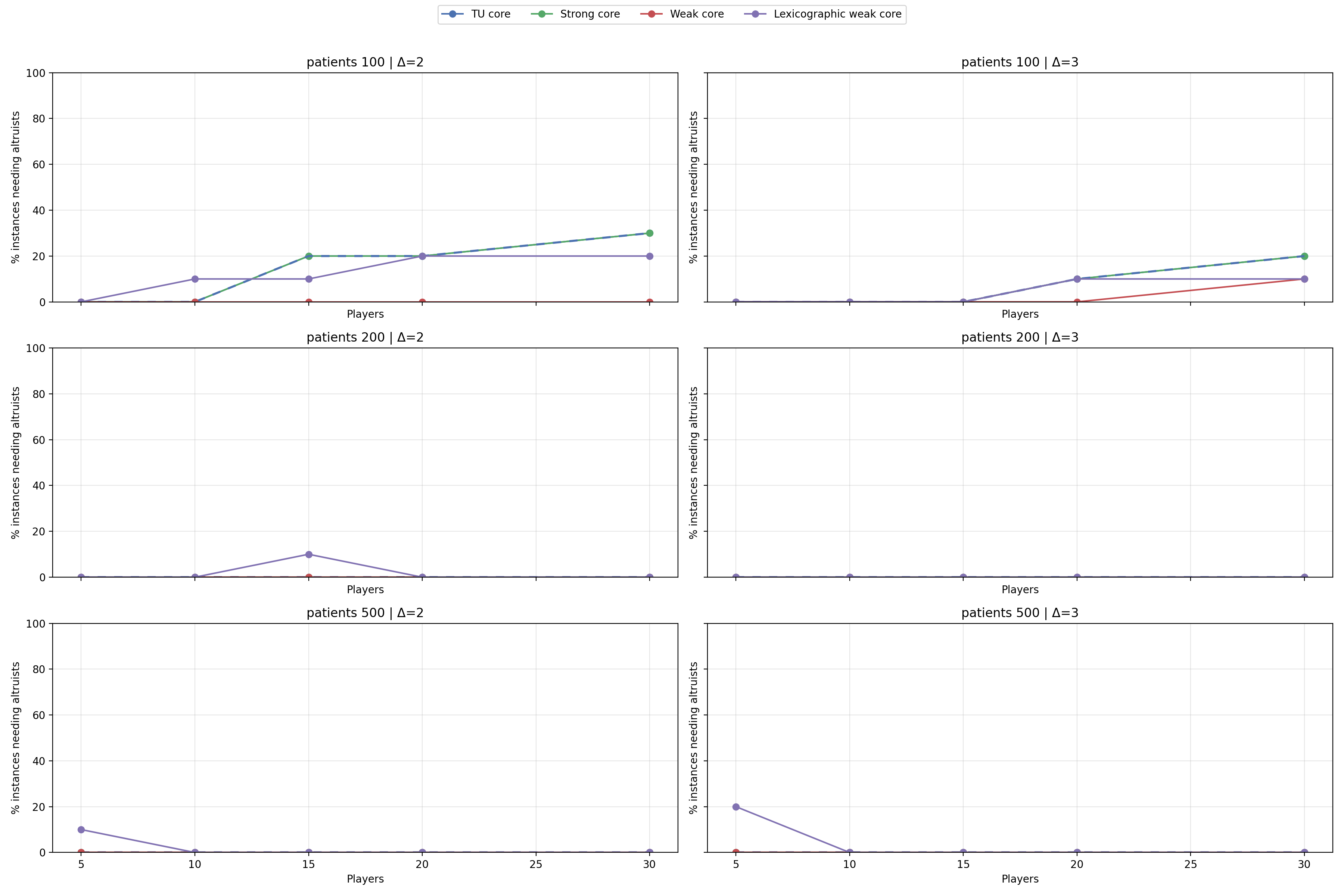}
    \caption{The percentage of instances with no weak/strong/TU core in each setting. The TU core data points are not visible in the figures, because each data point exactly overlaps with the strong core case.}
    \label{fig:percents}
\end{figure}

The simulation results are summarized in Figures~\ref{fig:avgaltruists}, \ref{fig:maxaltruists}, and \ref{fig:percents}. Figure~\ref{fig:avgaltruists} reports the average number of altruists required across different values of $\Delta$, $\mathcal{N}$, and total pool size, while Figure~\ref{fig:maxaltruists} shows the corresponding maximum number of altruists needed. Figure~\ref{fig:percents} presents the fraction of instances in which altruists are required.

Several patterns emerge. A larger exchange parameter ($\Delta = 3$) makes it slightly easier to satisfy each notion of the core, as does an increase in pool size. For pool sizes of $200$ and $500$, we are able to find a TU-core solution in all instances and parameter settings. Holding the pool size fixed, the frequency with which altruists are needed increases with the number of players. This is intuitive: additional players introduce more core stability constraints, making it harder to satisfy all of them simultaneously.

Our simulations suggest that, in practical settings, the core of a partition exchange economy is almost always nonempty. Across the $5 \times 2 \times 3 \times 10 = 300$ instances we consider, there is only a single case in which the weak core requires altruists, and even then a single donor suffices. Importantly, this donor is selected uniformly at random from the pool of available altruists, without assuming any special compatibility. These findings indicate that the weak core is highly attainable in practice, particularly when available altruists are strategically used as a resource to stabilize the market.

We observe similar behavior for the strong and TU cores. For more than 100 patients, our algorithms always find a strong-core or TU-core solution. At 100 patients, the average number of altruists we needed is below $0.7$, with maxima of three for the TU core and five for the strong core when $\Delta=2$. The larger requirement for the strong core in one instance likely stems from our heuristic, which may introduce cuts that exclude feasible strong-core solutions, causing the polyhedron to become empty even when a solution exists. For $\Delta=3$, the average requirement drops below $0.2$, and at most one altruist is needed in any setting.

Nonetheless, these figures are encouraging and suggest that strategically deploying altruists can achieve stability and robustness that exceed those of the weak core.

Finally, we examine the use of altruists to render solutions obtained by  ``Lexicographic weak-core optimization'' method so that it become weak-core stable. As expected, because the initial solutions do not explicitly enforce core constraints, we observe a somewhat higher fraction of instances in which altruists are required. For a pool of 100 patients, altruists are needed in about $10\%$ of cases; for $\Delta=2$ and more than 15 players, this fraction increases to $20\%$. In a small number of instances, altruists are required even for pool sizes of 200 and 500. Nevertheless, the average number of altruists remains below one ($\leq 0.6$) across all settings.

In all but one instance, three altruists suffice to achieve weak-core stability. The sole exception occurs with 200 patients, 15 players, and $\Delta=2$, where six altruists are required. Importantly, since the number of altruists in practice is typically at least $10\%$ of the patient pool (see, e.g., \citealp{agarwal2019market}), these requirements are well within realistic availability. This indicates that our proposed strategy remains viable even when core stability is not prioritized among players’ lexicographic objectives.

Overall, the strategic deployment of altruists enables the achievement of core stability without compromising the quality of the underlying solution.

\section{Conclusions}\label{sec:conclude}
This paper introduces the concept of the supplemented core in kidney exchange. Our main result shows that a small number of altruistic donors can stabilize the exchange market. This highlights a potentially new, yet important, role for altruistic donors in exchange systems. The applications extend beyond kidney exchange to other organ exchanges where organizational incentives play a critical role. Future work includes more empirical studies, exploration of computational challenges, and practical implementations. 

\bibliographystyle{EC/ACM-Reference-Format}
\bibliography{EC/ec2026}

\appendix
\section{Omitted Proofs}

\subsection{Proof of Theorem~\ref{thm:general-empty-core}}
\begin{figure} [h!]
    \centering
    \includegraphics[width=0.8\linewidth]{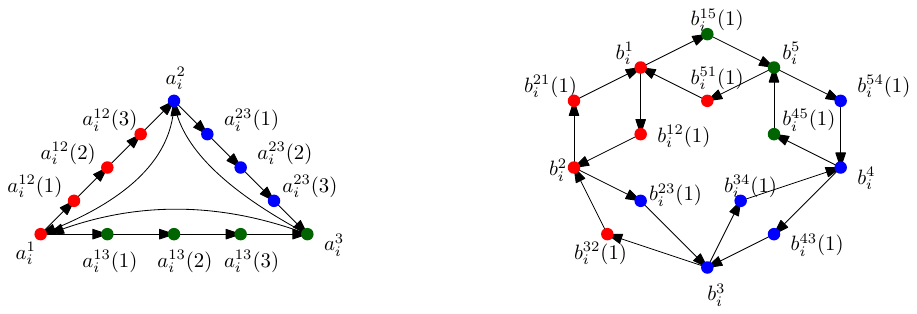}
    \caption{An illustration for Theorem~\ref{thm:general-empty-core} with $\Delta = 5$. A type-A graph is on the left and a type-B graph is on the right. The index $i$ is such that $i\equiv 0$ mod 2 and $i\equiv 1$ mod 3. Player 1 owns the red, player 2 the blue and player 3 the green vertices.}
    \label{fig:generaldeltaex}
\end{figure}
\begin{proof}
 We create an instance of a partition exchange economy with exchange bound $\Delta$ as follows.
 
 The graph $\graph$ of the partition exchange economy consists of $60x$ vertex-disjoint type-A graphs and $180x$ vertex-disjoint type-B graphs, where $x\in \mathbb{N}$ is a parameter. 

Type-A graphs have $3(\Delta -1)$ vertices and are essentially achieved from a triangle of two-way compatibility arcs, by subdividing some arcs.
The $i$-th type-A graph $\graph^A_i$ has vertices $a_i^1,a_i^2,a_i^3$, $a_i^{12}(1),\dots, a_i^{12}(\Delta -2)$, $a_i^{13}(1),\dots, a_i^{13}(\Delta -2)$, $a_i^{23}(1),\dots, a_i^{23}(\Delta -2)$. The set of compatibility arcs are $\{ (a_i^1,a_i^{12}(1)), (a_i^{12}(1),a_i^{12}(2)), \dots, (a_i^{12}(\Delta -2),a_i^{12}(\Delta-2)),(a_i^{12}(\Delta-2),a_i^2),(a_i^2,a_i^1)\}$, $\{ (a_i^2,a_i^{23}(1)),$  $ (a_i^{23}(1),a_i^{23}(2)), \dots, (a_i^{23}(\Delta -2),a_i^{23}(\Delta-2)),(a_i^{23}(\Delta-2),a_i^3),(a_i^3,a_i^2)\}$ and $\{ (a_i^1,a_i^{13}(1)), (a_i^{13}(1),a_i^{13}(2)), $  $ \dots, (a_i^{13}(\Delta -2),a_i^{13}(\Delta-2)),(a_i^{13}(\Delta-2),a_i^3),(a_i^3,a_i^1)\}$. 

If $i$ is even, then player 1 owns the vertices $a_i^1, a_i^{12}(1),\dots, a_i^{12}(\Delta -2)$, player 2 owns $a_i^2, a_i^{23}(1),\dots, $  $a_i^{23}(\Delta-2)$ and player 3 owns $a_i^3,a_i^{13}(1),\dots, a_i^{13}(\Delta -2)$. If $i$ is odd, then player 1 owns the vertices $a_i^1, a_i^{13}(1),\dots, a_i^{13}(\Delta -2)$, player 2 owns $a_i^2, a_i^{12}(1),\dots,$ $ a_i^{12}(\Delta-2)$ and player 3 owns $a_i^3,a_i^{23}(1),\dots, $ $a_i^{23}(\Delta -2)$.

Type-B graphs $\graph_B^i$ have $10\lfloor \frac{\Delta-2}{2}\rfloor+5$ vertices and are essentially achieved from a five long cycle of two-way compatibility arcs, by subdividing the arcs. Let us use the notation $\delta=\lfloor \frac{\Delta-2}{2}\rfloor$. The $i$-th type-B graph $\graph_i^B$ has vertices $\{ b_i^l\mid l\in [5]\}$ and $\{ b_i^{l(l+1)}(1),\dots,$  $b_i^{l(l+1)}(\delta),  b_i^{(l+1)l}(1),\dots, b_i^{(l+1)l}(\delta) \mid l\in [5]\}$, where we let $5+1:=1$ in the upper indices (if $\delta = 0$, then no such vertices are added).
The set of compatibility arcs are $\{ (b_i^l,b_i^{l(l+1)}(1)),  (b_i^{l(l+1)}(1),b_i^{l(l+1)}(2)),\dots, (b_i^{l(l+1)}(\delta),b_i^{l+1}), (b_i^{l+1},b_i^{(l+1)l}(1)),$  $ (b_i^{(l+1)l}(1),b_i^{(l+1)l}(2)),$  $ \dots, $ $ (b_i^{(l+1)l}(\delta),b_i^l)\mid l\in [5]\}$, where again we let $5+1:=1$ in the upper indices.

If $i\equiv 1$ mod 3, then player 1 owns $b_i^1,b_i^{51}(1),\dots, b_i^{51}(\delta),b_i^{21}(1)\dots, b_i^{21}(\delta),b_i^2,b_i^{12}(1),\dots, b_i^{12}(\delta), $ $b_i^{32}(1),$ $\dots,b_i^{32}(\delta)$, player 2 owns $b_i^3,b_i^{23}(1),\dots, b_i^{23}(\delta),b_i^{43}(1),\dots, b_i^{43}(\delta),b_i^4,b_i^{34}(1),\dots, b_i^{34}(\delta),$ $b_i^{54}(1),$ $b_i^{54}(\delta)$, and player 3 owns $b_i^5, b_i^{45}(1),\dots, b_i^{45}(\delta),b_i^{15}(1),\dots, b_i^{15}(\delta)$. 

If $i\equiv 2$ mod 3, then player 2 owns $b_i^1,b_i^{51}(1),\dots, b_i^{51}(\delta),b_i^{21}(1)\dots, b_i^{21}(\delta),b_i^2,b_i^{12}(1),\dots, b_i^{12}(\delta),$ $b_i^{32}(1),$  $\dots,b_i^{32}(\delta)$, player 3 owns $b_i^3,b_i^{23}(1),\dots, b_i^{23}(\delta),b_i^{43}(1),\dots, b_i^{43}(\delta),b_i^4,b_i^{34}(1),\dots, b_i^{34}(\delta),$ $b_i^{54}(1),$ $b_i^{54}(\delta)$, and player 1 owns $b_i^5, b_i^{45}(1),\dots, b_i^{45}(\delta),b_i^{15}(1),\dots, b_i^{15}(\delta)$.

If $i\equiv 0$ mod 3, then player 3 owns $b_i^1,b_i^{51}(1),\dots, b_i^{51}(\delta),b_i^{21}(1)\dots, b_i^{21}(\delta),b_i^2,b_i^{12}(1),\dots, b_i^{12}(\delta),$ $b_i^{32}(1),$  $\dots,b_i^{32}(\delta)$, player 1 owns $b_i^3,b_i^{23}(1),\dots, b_i^{23}(\delta),b_i^{43}(1),\dots, b_i^{43}(\delta),b_i^4,b_i^{34}(1),\dots, b_i^{34}(\delta),$ $b_i^{54}(1),$ $b_i^{54}(\delta)$, and player 2 owns $b_i^5, b_i^{45}(1),\dots, b_i^{45}(\delta),b_i^{15}(1),\dots, b_i^{15}(\delta)$.

 We have that $|V|=60x\cdot 3(\Delta -1) + 180x\cdot (10\delta +5)=O(x)$. For an illustration of type-A and type B graphs, see Figure~\ref{fig:generaldeltaex}.

It is easy to see that the feasible exchange cycles for a type-A graph $\graph_i^A$ are the $\Delta$-long cycles containing either $\{a_i^1,a_i^{12}(1),\dots, a_i^{12}(\Delta-2),a_i^2 \}$, or $\{a_i^2,a_i^{23}(1),\dots, a_i^{23}(\Delta-2),a_i^3 \}$ or $\{a_i^1,a_i^{13}(1),\dots, $ $a_i^{13}(\Delta-2),a_i^3 \}$.
Furthermore, exactly one can be chosen in any exchange, as any two of them have a vertex in common.

The feasible exchange cycles for a type-B graph $\graph_i^B$ are the cycles containing $\{  (b_i^{l}, b_i^{l(l+1)}(1),\dots, $ $b_i^{l(l+1)}(\delta),b_i^{l+1},b_i^{(l+1)l}(1),\dots, b_i^{(l+1)l}(\delta) \}$ for some $l\in [5]$. Here, at most two, non-adjacent exchange cycles can be chosen simultaneously.

It follows that in total, the 3 players together can cover only $60x\Delta + 180x(4\delta +4)$ vertices, which is either $840x(\delta +1)$ or $840x(\delta +1) + 60x$, depending on whether $\Delta$ is even or odd respectively.

A single player can achieve an exchange only in those type-B graphs, where he owns $4\delta +2$ vertices. In each of them, it can cover $2\delta +2$ of his vertices, totaling $120x\cdot 2(\delta+1)=240x(\delta+1)$.

If two players join, then we claim that together they can create an exchange where at least ? of their vertices are covered. Indeed, they can form an exchange in the type-A graphs, where only $60x\Delta$ of their vertices are covered in each. In odd indexed cases, $\Delta -1$ are covered from one player and $1$ of the other, and in even indexed cases, the situation is reversed. Hence, both players have $30x\Delta$ vertices covered among type-A graphs. This arrangement is uniquely best for both of them among type-A graphs. This is either $60x(\delta +1)$ or $60x(\delta +1)+ 30x$, depending on whether $\Delta$ is even, or odd, respectively.

Among the type-B graphs, there are $120x$ of them, where one of them only controls $2\delta+1$ vertices, while the other controls $4\delta +2$. In such cases, they can cover $2\delta +2$ vertices, in two ways. Either $\delta +1$ from each, or $2\delta +2$ from one, and $0$ from the other. In the $60x$ type-B graphs, where both have $4\delta +2$ vertices, they can cover $2\delta +2$ from both of them, which is uniquely optimal. Hence, in total they can cover $120x(2\delta +2)+60x(4\delta +4)=480x(\delta +1)$.
Observe, that among the type-B graphs, they can distribute the cycles in a way such that both of them receives a multiple of $(\delta +1)$ transplants and both of them receives at least $180x(\delta +1)$.

Suppose the $3x$-supplemented core is nonempty, where we have seen that $3x=O(|V|)$. Since any directed path in the compatibility graph has length less than $10\delta+10=10(\delta+1)$, it follows that even with these altruistic donors, the total number of transplants is at most $840x(\delta+1)+30x(\delta+1)$ or $840x(\delta+1)+30x(\delta+1)+60x$, based on the parity of $\Delta$. Hence, the two players who are worst off receive at most $580x(\delta +1)$ or $580x(\delta +1)+ 40x$ transplants respectively.

Since neither of them blocks by themselves, each has at least $240x(\delta +1)$. 
If they join, then they can each get $60x(\delta +1)$ or $60x(\delta +1) + 30x$ among the type-A graphs, depending on whether $\Delta$ is even or odd, respectively. Furthermore among the type-B graphs, as we have observed, they can divide $480x(\delta +1)$ in a way such that both of them receive a multiple of $(\delta +1)$ transplants and both of them receives at least $180x(\delta +1)$ (and so at most $300x(\delta +1)$. 

We get that no matter the parity of $\Delta$, the two players can have, in total, at least $20x(\delta +1)$ more transplants together, than what they received originally. As both of them had at least $240x(\delta +1)$ transplants originally, the other had at most $340x(\delta +1)$ (or $340x(\delta +1) + 40x$). Since $\delta +1\ge 1$, in both cases, we get that they need at most $290x(\delta+1)$ from the type-B graphs to match their original utility. 

Therefore, using again that in any arrangement where both of them receives at most $300x(\delta +1)$ is possible among the type-B graphs, they form a blocking coalition, as they can divide the extra $20x(\delta+1)$ donors in a way that both of them improve by at least $10x(\delta +1)$. This is a contradiction, showing that the $3x=O(|V|)$-supplemented core is empty. 
\end{proof}

\subsection{Proof of Remark~\ref{rem:localchange}}
\begin{proof}

In the proof of Theorem~\ref{thm:arbitrary}, we chose an arbitrary matching that leaves only one vertex from each odd cycle $c\in \mathcal{C}$. 

We will show that if we choose this matching properly, then it satisfies the claimed additional constraint.

Create a graph $\mathcal{H}$ as follows. The set of vertices of $\mathcal{H}$ is $\N$ and there is an arc $ij$, if there exists an odd cycle $c\in \mathcal{C}$, where $i$ has a covered vertex and $j$ has an uncovered vertex. Let $A$ be the set of those organizations such that they have strictly less than $\lceil  |U^i\cap \mathcal{C}|/3\rceil$ vertices uncovered and $B$ be the set of those organizations such that they have strictly more than $\lceil |U^i\cap \mathcal{C}|/3\rceil$ vertices uncovered. If $B$ is empty, we are done. If $B$ is nonempty, then by the maximality of the chosen matching, $A$ is also nonempty.

We claim that there is a path from $A$ to $B$ in $\mathcal{H}$. Suppose otherwise. Then, the set of vertices $\hat{B}$ not reachable from $A$ all satisfy that they have strictly more than $ \sum_{i\in \hat{B}}|U^i\cap \mathcal{C}|/3$ uncovered vertices. Furthermore, as there is no incoming edge to $\hat{B}$ in $\mathcal{H}$, it follows that any odd cycle that contains an uncovered vertex of a organization from $\hat{B}$ satisfies that all other vertices also belongs to someone from $\hat{B}$. As every odd cycle has at least 3 vertices and each odd cycle satisfies $c\subset \graph [\cup_{i\in \N}U^i]$ as we observed in the proof of Theorem~\ref{thm:arbitrary}, we get that if we sum the number of uncovered vertices of each organization in $\hat{B}$, then it is at most one-third of the total number of vertices of these organizations in $\mathcal{C}$, contradicting the above observation about $\hat{B}$.

Hence, in this case we can find an $A-B$ path in $\mathcal{H}$ along which we can create a new matching, where someone from $A$ has one less vertex covered and someone from $B$ has one more vertex covered. Iterating this, we arrive at a point, where $B$ will be empty.

Hence, for each organization, $ \frac{ |U^i|}{3}$ additional donors suffice.
\end{proof}

\subsection{Proof of Lemma~\ref{thm:logbound}}

\begin{proof}
Let the number of groups be a constant $g$.
Let $p$ be the smallest strictly positive probability between groups. We will not assume anything on the distribution of the vertices to the groups in $\Phi$.

    We will prove a slightly stronger result, that is, we will show that the expected number of independent edges (i.e. vertex vertex-disjoint edges with no other edges between them) is also $\mathcal{O}(\log |V|)$. This statement is stronger, as any independent set of odd cycles also contains an independent set of edges of the same size. Let $\indep (\graph )$ denote the maximum number of independent edges and let $X$ be the random variable s.t. for a graph $\graph$, $X(\graph ) =\indep (\graph )$.  

    For some fix distinct vertices $U=\{v_1,\dots, v_{2\ell}\}$, let $X_U$ be an indicator random variable that is equal to $1$, if $\graph [U]$ contains $\ell=\frac{|U|}{2}$ independent edges, and $0$ otherwise.

    $\mathbb{E}[X]=\sum_{k=1}^{|V|}k\mathbb{P}(X=k)=\sum_{k=1}^{|V|}\mathbb{P}(X\ge k)$.

    In the following, we bound $\mathbb{P}(X\ge k)$. Clearly, $X\ge k$ can only happen if $ \sum\limits_{U\subseteq V\mid |U|=2k}X_U\ge 1$. 

    Take a fix $U\subseteq V$ with $|U|=2k$. For any set of $k$ independent edges, there must be at least $2k/g^2$ such that their starting vertices are in the same group and their ending vertices are also in the same group. We can assume that no edge exists between the same group. Hence, $k/g^2$ independent edges among these vertices is only possible, if for each vertex in group 1, it is connected to exactly one vertex in group $2$. This can be happen in $(k/g^2)!\le k^k$ many subgraphs. The probability that a fix subgraph happens is bounded by $(1-p)^{k^2/g^4-k/g^2}$, because each edge from group 1 to group 2 have probability at least $p$ (since there can be edges between them), and $k/g^2(k/g^2-1)$ of these edges cannot be included. 

    Hence, using $g\ge 1$, we have that $\mathbb{P}(X_U\ge 1)\le  2\binom{2k}{\lceil 2k/g^2\rceil}k^k(1-p)^{k^2/g^4-k/g^2}\le 2(2^{2}k^3(1-p)^{k/g^4-1/g^2})^k$.

    Then, $\mathbb{P}(X\ge k)\le  \sum\limits_{U\subseteq V\mid |U|=2k}\mathbb{P}(X_U\ge 1)\le 2(|V|^24k^3(1-p)^{k/g^4-1/g^2})^k\le 2(4|V|^5(1-p)^{k/g^4-1/g^2})^k$.

    For any constant $\delta>0$, 
    this is bounded by $\frac{1}{|V|^{\delta}}$ for $k\ge D\log |V|$ for a constant $D=D(p,g,\delta)$, as $g$ is a constant and $1-p<1$ is also a constant. Hence, $\mathbb{P}(X\ge D\log |V|)\le \frac{1}{|V|^{\delta}} \xrightarrow{|V| \to \infty} 0$. Therefore, $\sum_{k=1}^{|V|}\mathbb{P}(X\ge k)\le \sum_{k=1}^{D\log |V|}\mathbb{P}(X\ge k)+1 \le D\log |V|$. 
\end{proof}

\subsection{Proof of Theorem~\ref{thm:cycle-core}}

\begin{proof}

Take a labeling $f$ of $\graph$ with set of types $\typeset$ and let $|T|=t$. We refer to $f(v)$ as the \emph{type} of~$v$.

By Lemma~\ref{thm:ApplyScarf}, we have that there exists a fractional exchange $y^*$, such that if $u_i(\Ex^*)\ge \lfloor y^*(E(V^i))\rfloor$ for all $i\in \N$ for an exchange $\Ex^*$ using $d$ additional donors, then $\Ex^*$ is in the $d$-supplemented core. 

 We create a polyhedron $\mathcal{P}= \{ Az\le a, Bz\ge b, z\ge 0\}$ as follows. The variables $z$ correspond to the cycles of $\graph$ with length bounded by $\Delta$.

In the matrix $A$, we have a row $(t,i)$ for each type $t\in \typeset$ and organization $i\in \N$. The entry in row $(t,i)$, column $c$ (corresponding the the cycle $c$) is $\beta_c^{t,i}=|\{ v\in c\cap V^i\mid v \text{ has type } t \}|$ and the bound is $a_{t,i}=|\{ v\in V^i\mid v \text{ has type } t\}|$. That is, we have inequalities of the form
$$\sum\limits_{c\in \mathcal{C}_{\Delta}} \beta_c^{t,i} z_c\le  a_{t,i}.$$

In the matrix $B$, we have a row for each organization $i\in \N$. The entry in row $i$, column $c$ is $\gamma_c^i=|c\cap U^i|$ and the lower bound is $b_i=\lfloor \sum\limits_{c\in \mathcal{C}_{\Delta}}\gamma_c^i y^*_c \rfloor $. That is, we have inequalities of the form
$$\sum\limits_{c\in \mathcal{C}_{\Delta}} \gamma_c^i z_c\ge \lfloor \sum\limits_{c\in \mathcal{C}_{\Delta}}\gamma_c^i y^*_c \rfloor.$$

It is clear that $y^*$ satisfies these constraints, so $y^*\in \mathcal{P}$.

Consider the following rounding procedure. Initially, let $z^*$ be an extreme point that maximizes $\sum\limits_{c\in \mathcal{C}_{\Delta}}(\sum\limits_{i\in N}\gamma_c^i)z_c^*$.

While $z^*$ is not integral, we do the following

\begin{itemize}
    \item[(1)] For each integral component $c$ of $z^*$, we delete the variable $c$ and update the bounds accordingly - that is, if $z^*_c=1$ for some $c\in \mathcal{C}_{\Delta}$, then we decrease $b_i$ by $\gamma_c^i$ and $a_{t,i}$ by $\beta_c^{t,i}$ for $t\in \typeset$.
    \item[(2)] If there are rows in $A$ or $B$ such that $A_{t,i}\textbf{1}\le a_{t,i}+ \Delta - 1 $ or $B_iz^*\le \Delta - 1$, then we eliminate them from $\mathcal{P}$ (here $\textbf{1}$ is the all $1$ vector).
    \item[(3)] Update $z^*$ to be an extreme point that maximizes $\sum\limits_{c\in \mathcal{C}_{\Delta}}(\sum\limits_{i\in N}\gamma_c^i)z_c^*$ of the updated polyhedron~$\mathcal{P}$.
\end{itemize}

First of all, suppose that this procedure terminates in an integer valued $z^*$. Then, $z^*$ corresponds to an exchange $\Ex^*$ using cycles of length at most $\Delta$ that violates each constraint by at most $\Delta -1$. This holds because $A_{t,i}z^*$ must remain below $ A_{t,i}\lceil z^*\rceil \le A_{t,i}\textbf{1}\le a_{t,i}+\Delta -1$, if we eliminated row $A_{t,i}$ and $B_iz^*$ must remain above $0\ge b_i-\Delta +1$ if we eliminated row $B_i$. 


For each organization, we add $ (b_i-B_iz^*)^++\sum_{t\in \typeset} (A_{t,i}z^*-a_{t,i})^+ \le (\Delta -1)(t+1)$ additional donors, in total at most $(\Delta -1)n(t+1)$ for all organizations. With the help of these additional donors, we can create an exchange $\Ex$ that assigns at least $\sum\limits_{c\in \mathcal{C}_{\Delta}} \lfloor \gamma_c^i z^*_c \rfloor$ transplants for each organization - as vertices with the same type $t$ in the same organization $i$ are interchangable by the assumption that compatibility only depends on the type. 

By Lemma~\ref{thm:ApplyScarf}, $\Ex$ is in the $(\Delta -1)n(t+1)$-supplemented core. 

If $\Ex$ is not Pareto-optimal, then take an exchange $\Ex'$ in $\GG$ that Pareto-dominates it and maximizes $\sum_{i\in \N}u_i(\Ex')$. Then, $\Ex'$ is clearly Pareto-optimal. Also, $u_i(\Ex')\ge u_i(\Ex)\ge \lfloor \sum\limits_{c\in \mathcal{C}_{\Delta}}\gamma_c^iy_c^*\rfloor$ for all $i\in \N$, so by Lemma~\ref{thm:ApplyScarf} is in the the $(\Delta -1)n(t+1)$-supplemented core too.

Therefore, to prove the Theorem, it only remains to show that the procedure must terminate. That is, if there are no more rows that can be eliminated, then $z^*$ must be integral.

Suppose for the contrary that $z^*$ is not integral, but there are now rows that can be eliminated. As we eliminate integral variables in the procedure, all variables of $z^*$ can be assumed to be strictly positive and fractional. Hence, we can use Lemma~\ref{lemma:extremepoint}.

We give one token to each fractional component $z_c^*$.
Then, we redistribute these tokens to the rows as follows. We give $\gamma^i_c \frac{z_c^*}{\Delta}$ from $z_c$'s token to the row $B_i$ and $\beta^{t,i}_c \frac{1-z_c^*}{\Delta}$ to the row $A_{t,i}$ for $i\in \N,t\in \typeset$. Then, the total number of tokens that $z_c^*$ distributes is $\sum_{i\in \N}\gamma_c^i\frac{z_c^*}{\Delta}+\sum_{i\in \N}\sum_{t\in \typeset}\beta^{t,i}_c \frac{1-z_c^*}{\Delta}\le z_c^*+1-z_c^*\le 1$, since $|c|\le \Delta$.

Take a tight row $B_i$. Then, $\Delta -1< B_iz^*=b_i\in \mathbb{Z}$, as the row cannot be eliminated, so $B_iz^*=\sum_{c\in \mathcal{C}_{\Delta}}\gamma_c^iz_c^*\ge \Delta$. This implies that any such row must obtain at least $\frac{1}{\Delta}\sum_{c\in \mathcal{C}_{\Delta}}\gamma_c^iz_c^*\ge 1$ tokens. 

Next, take a tight row $A_{t,i}$. Then, $A_{t,i}z^*=a_{t,i}$ and $A_{t,i}\textbf{1}\ge a_{t,i}+\Delta$, as $A_{t,i}\textbf{1}$ is integer and the row cannot be eliminated. Putting these two together, we get that $A_{t,i}(1-z^*)=\sum_{c\in \mathcal{C}_{\Delta}}\beta_c^{t,i}(1-z_c^*)\ge \Delta$. Therefore, any such row must obtain at least $\frac{1}{\Delta}\sum_{c\in \mathcal{C}_{\Delta}}\beta_c^{t,i}(1-z_c^*)\ge 1$ tokens.  

Hence, we must have that every row with a nonzero entry must be tight and must receive exactly one token and they must be linearly independent, otherwise the number of linearly independent tight rows could not be equal to the number of variables as Lemma~\ref{lemma:extremepoint} requires. Furthermore, any row that should receive some positive fraction of a token from any $z_c^*$ cannot have been eliminated by the same reasons. So if $\beta_c^{t,i}>0$, then $A_{t,i}$ is not eliminated yet and if $\gamma_c^i>0$ then $B_i$ is not eliminated yet. Finally, each component $z_c^*$ must distribute exactly one token, hence $\sum_{i\in \N}\gamma_c^i=\Delta$ and $\sum_{i\in \N}\sum_{t\in \typeset}\beta_c^{t,i}=\Delta$ for any such cycle $c$.

Therefore, if we sum the vectors $A_{t,i}$ for all tight rows, we must get a vector $a'$ such that $a'_c=\Delta$ $\forall c$ using that $\sum_{t\in \typeset}\sum_{i\in \N}\beta_c^{t,i}=\Delta$. Similarly, if we sum all tight $B_i$ rows, we must get the same vector $a'$, as $\sum_{i\in \N}\gamma_c^i=\Delta$ too. This contradicts the fact that these tight rows are linearly independent. 

\end{proof}

\end{document}